\documentclass[11pt]{article}
\usepackage[margin=1in]{geometry}
\usepackage[
    backend=biber,
    style=numeric,
    natbib=true,
    giveninits=true,
    uniquename=init
]{biblatex}
\usepackage{booktabs, makecell}
\usepackage{pdflscape}
\usepackage{rotating}
\usepackage{amsmath, amsfonts, amssymb, amsthm, mathtools, subcaption}
\usepackage{tikz}
\usepackage[most]{tcolorbox}
\usepackage{thmtools}
\usepackage{thm-restate}
\usepackage{paralist}
\usetikzlibrary{shapes.callouts}

\newcommand{\sr}[1]{\text{sign-rank}\!\left(#1\right)}
\newcommand{\MP}{\operatorname{MP}}

\newcommand{\PM}{\operatorname{PM}}
\newcommand{\nand}{\operatorname{NAND}}
\usepackage{expl3}
  
 \newcommand{\cD}{\mathcal{D}}

\newcommand{\cQ}{\mathcal{Q}}

\newcommand{\eps}{\varepsilon}
\newcommand{\rank}{\textrm{rank~}}
\newcommand{\NAND}{\textrm{NAND~}}
\newcommand{\signrank}{\textrm{signrank~}}

\newtheorem{lemma}{Lemma}

\theoremstyle{definition}

\usepackage{pifont}

\newcommand{\cmark}{\ding{51}}
\newcommand{\xmark}{\ding{55}}
\tikzset{
    mybubble/.style={
        draw,
        fill=blue!20,
        rectangle callout,
        callout absolute pointer={(0,-1)},
        rounded corners,
        inner sep=10pt,
        font=\sffamily
    }
}

\tcbset{
  exbox/.style={
    width=\linewidth,
    boxsep=0.4mm,
    left=0.5mm,right=0.5mm,top=0.4mm,bottom=0.4mm,
    arc=0.8mm,
    colbacktitle=white,
    enhanced
  }
}
\title{Retrieval Needs Multivectors: An Exponential Separation}

\author{Mihir Agarwal, Viraj Agrawal, Sabyasachi Basu, \\ Ankit Garg, Kirankumar Shiragur \\
  \footnotesize \texttt{\{t-miagarwal, t-viragrawal, sabyasachi.basu, garga, kshiragur\} }@microsoft.com
  \small \\ Microsoft Research India
}

\date{}

\begin{document}

\maketitle

\begin{abstract}
Recent works have highlighted the expressive limitations of embedding based retrieval models through both theoretical analyses and challenging benchmarks such as LIMIT. While multi-vector embeddings consistently outperform single-vector embeddings, the precise representational gap between them remains poorly understood. In this work, following Jayaram's work, we provide the first explicit family of query and document sets, together with their relevance matrices, for which single-vector embeddings that rank all relevant documents above irrelevant ones require exponential size, whereas polynomial-size multi-vector embeddings suffice. Our result establishes an exponential separation between the expressive power of single-vector and multi-vector embeddings for the task of ranking of documents as opposed to approximating numerical scores as in the work of Jayaram.
 
Motivated by our theoretical construction, we introduce ANDOR, a new retrieval benchmark that naturally instantiates these hard examples. We show that state-of-the-art single-vector embedding models perform poorly on ANDOR in the zero-shot setting and exhibit only marginal improvements after fine-tuning, highlighting the inherent difficulty of the benchmark compared to prior work. In contrast, multi-vector models consistently outperform their single-vector counterparts and improve substantially with fine-tuning, closely aligning with our theoretical predictions.

\end{abstract}
\section{Introduction}

Late-interaction models such as ColBERT~\cite{kz20} have established multi-vector retrieval as a powerful alternative to single-vector dense embeddings. Unlike single-vector models, which represent each query and document with a single embedding and compute their similarity using a dot product, late-interaction models retain token-level embeddings, representing queries and documents as sets of vectors. The relevance of a document to a query is then computed using the Chamfer score, which matches each query vector to its most similar document vector and sums the resulting maximum inner products. The fine-grained token level interaction leads to substantially improved retrieval quality, particularly on complex tasks such as agentic and multimodal retrieval, where late-interaction models currently lead benchmarks such as BrowseComp+ and Vidore v3. Their strong empirical performance has also driven widespread industrial adoption, with systems such as Pinecone~\cite{pinecone_cascading_retrieval} and Mixedbread~\cite{mixedbread_multimodal}, as well as native multi-vector support in vector databases including Vespa~\cite{vespa_multi_vector_hnsw}, Qdrant~\cite{qdrant_late_interaction}, and LanceDB~\cite{lancedb_late_interaction}.

\begin{figure}[t]
    \centering
    \includegraphics[width=0.7\columnwidth]{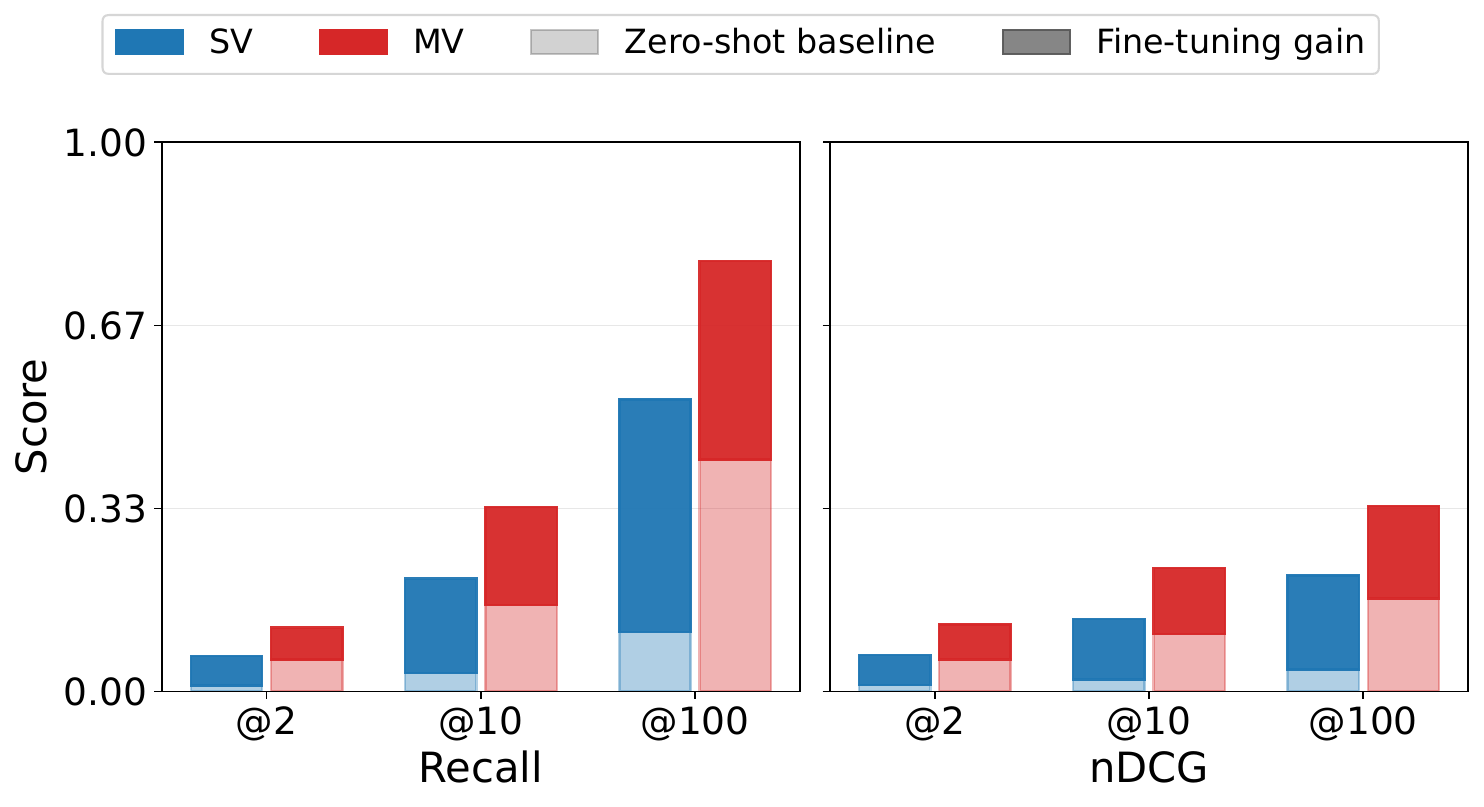}
    \caption{Recall (left) and nDCG (right) for the single-vector (SV; blue) and multi-vector (MV; red) representations of Jina Embeddings v4 after joint fine-tuning on ANDOR at a mean training width of $5.5$, evaluated at test width $3.5$.}
    \label{fig:mainfig}
\end{figure}

Despite their empirical success, it remains unclear whether \emph{multi-vector representations are fundamentally more expressive than single-vector embeddings}. Existing theoretical results only partially address this question. For instance, sign-rank characterizes the dimension required for single-vector embeddings to preserve \emph{retrieval orderings} -- that is, to rank every relevant document above every irrelevant document -- but it says nothing about the representational power of multi-vectors. On the other hand, recent work establishes an exponential separation for preserving Chamfer (MaxSim) similarity scores~\cite{dhr+24,j26}. However, preserving Chamfer similarity scores is a considerably stronger requirement than preserving document orderings. Empirical evidence is similarly inconclusive. The LIMIT benchmark~\cite{wbnl26} identifies retrieval tasks on which state-of-the-art single-vector models perform substantially worse than multi-vector models. However, subsequent work~\cite{sagks26} shows that much of this gap can be recovered through task-specific fine-tuning, raising the possibility that the observed advantage stems from training rather than an inherent representational limitation.

\paragraph{Contributions:} In our work, we resolve the fundamental question of whether multi-vector representations are fundamentally more expressive than single-vector embeddings for retrieval. Specifically, we prove an explicit exponential separation between the two for preserving retrieval orderings. We construct a family of relevance matrices that admits polynomial-size\footnote{The size of a multi-vector representation is defined as the number of vectors per document (or query) multiplied by the dimension of each vector.} multi-vector representations, yet requires exponential-dimensional single-vector embeddings to rank every relevant document above every irrelevant document. Our lower bound builds on sign-rank techniques from~\cite{rs10} and is complementary to that of ~\cite{j26}. While the construction of ~\cite{j26} requires exponentially large single-vector embeddings to preserve Chamfer scores, we show that it nevertheless admits polynomial-dimensional single-vector embeddings that preserve retrieval orderings.

Finally, our theoretical construction naturally yields a realistic benchmark. We introduce ANDOR, a shopping retrieval benchmark whose queries consist of conjunctions of disjunctions, a compositional structure common in faceted search~\cite{BGH+08,tunkelang2009faceted,VFK13}. Consistent with our theoretical results, ANDOR exhibits a persistent separation between single-vector and multi-vector retrieval (See Figure~\ref{fig:mainfig}): unlike LIMIT \cite{wbnl26}, the performance gap remains even after task-specific fine-tuning. Multi-vector models outperform single-vector models from 2-10× in the zero-shot setting and 2× after fine-tuning, while achieving performance comparable to established retrieval benchmarks. We present ANDOR as a challenging benchmark for future retrieval research.

\section{Related Work}

%\subsubsection{Dense and late-interaction retrieval}
Modern retrieval systems have popularised the use of neural embedding models. Dense  Retrieval~\cite{koml+20}, in which a query and document are each mapped to a single-vector embedding and relevance is scored using the cosine similarity. Contemporary
single-vector encoders such as Qwen3 Embedding~\cite{zlz+25} and
Arctic-Embed~\cite{ymy+24} follow this template, typically using contrastive
objectives~\cite{ovk18,sohn16}. Late-interaction models instead have token-level embeddings.  ColBERT~\cite{kz20} scores queries and documents using Chamfer score; ColBERTv2~\cite{skfpz22} improves effectiveness while compressing token embeddings.  MUVERA~\cite{dhr+24} connects the two
regimes by mapping multi-vector embeddings to fixed-dimensional embeddings whose inner
products approximate Chamfer similarity. Models such as Jina Embeddings v4~\cite{gsm+25} are jointly trained to expose both single-vector and multi-vector embedding heads from a single backbone. These models describe practical value of late-interaction models, but do not demonstrate representational differences between the two schemes. 

%\paragraph{Empirical limits and compositional retrieval.}
A second line of work investigates when single-vector retrieval fails. \cite{wbnl26} give a worst case analysis of the dimension needed for single-vector embeddings. Additionally, they present LIMIT, a controlled benchmark on which strong single-vector
models perform poorly. Their theory does not establish a formal single-vector
versus multi-vector separation, nor does it imply that the finite LIMIT relevance
matrix itself requires a large representation dimension. ~\cite{sagks26} further show through controlled experiments that
dimensionality alone does not fully explain the observed LIMIT behavior.

The mathematical machinery used in our separation originates in communication
complexity. Pattern matrices and their sign-rank lower bounds
\cite{rs10,Sherstov11,sw19} provide tools for bounding the dimension of
inner-product realizations of Boolean matrices. In retrieval, ~\cite{j26} show that approximating certain Chamfer similarity matrices by ordinary inner products can require very large single-vector representations.
Their hard instance is constructed from a \NAND pattern matrix with singleton queries and multi-vector documents, and their objective is pointwise approximation of numerical similarity scores. Concurrently, ~\cite{kizm26} study the capacity of late-interaction scoring and variants designed to increase it. The gap between MUVERA’s upper bound~\cite{dhr+24} and Jayaram’s lower bound~\cite{j26} has since been nearly closed by recent work~\cite{jlmw26}.
 
Our objective is structurally different: we require only that every relevant document be ranked above every irrelevant document. Score-approximation hardness does not automatically imply hardness for this ordering objective. Our dataset ANDOR is also complementary to existing benchmarks. Its queries are conjunctions of
disjunctions over product attributes, a compositional structure arising naturally
in multifaceted search in e-commerce and other avenues~\cite{BGH+08,tunkelang2009faceted,VFK13}.

\section{Preliminaries} \label{sec:preliminaries}

Here we formalize the single-vector and multi-vector representations and the notion of retrieval ordering problem studied in this work.
\\
Let $R\in\{0,1\}^{|\cQ|\times|\cD|}$ be a relevance matrix, where $\mathcal Q$ is a set of queries and $\mathcal D$ is a set of documents. If document $j$ is relevant to query $i$, $R_{i,j}=1$ otherwise $R_{i,j}=0$. We want to represent the queries and documents using vector representations, along with a scoring metric, to determine if a document is relevant to a query. In this paper, we focus on the following two popular methods.
\paragraph{Single-vector representation}
A single-vector representation assigns a $d$-dimensional vector $q_i\in\mathbb R^d$ for each query $i$ and a $d$-dimensional vector $p_j\in\mathbb R^d$ for each document $j$. We use inner product, denoted by $\langle q_i,p_j\rangle$ to get the similarity score.

\paragraph{Multi-vector representation}
A multi-vector representation instead assigns a set of vectors $Q(i) = \{q_1,\dots,q_{L_Q}\}$ to each query $i$ and $P(j) = \{p_1,\dots,p_{L_P}\}$ to each document $j$. We use the Chamfer score which for $i$-th query and $j$-th document is defined as:
\begin{align}
S(Q(i),P(j))=\frac{1}{L_Q}\sum_{q\in Q(i)}\max_{p\in P(j)}\langle q,p\rangle. \label{eq:Chamfer}
\end{align}

Our objective is to understand the expressive power of these representations in representing a given relevance matrix. We want that for each query $q$, the scores of the relevant documents be higher than those of irrelevant documents. Formally, the goal of \textbf{retrieval ordering problem} is to design vector representations along with their scoring metric such that,
\begin{equation}
    \min_{p:R(q,p)=1} \text{score}(q,p) > \max_{p:R(q,p)=0} \text{score}(q,p) \label{eq:retrievalordering}
\end{equation}
We therefore seek relevance matrices that admit compact multi-vector representations but require large single-vector representations for preserving the retrieval ordering. We will now introduce pattern matrices as defined in Definition 2.5 by ~\cite{rs10} that will help us to construct the relevance matrix.

Let $x \in \{0,1\}^N$, $f: \{0,1\}^n \to \{0,1\}$, and let $N = c n$ for an integer $c \ge 2$. Partition $[N]$
into $n$ blocks
$$
  B_1, \dots, B_n, \qquad |B_i| = c.
$$
A selector is a vector $\mu\in\{0,1\}^N$ satisfying 
\begin{equation}
    \sum_{t\in B_s}\mu_t=1 \hspace{1cm}\text{for every }s\in[n].
\end{equation}
Thus $\mu$ selects exactly one coordinate of $x$ from every block. Define the extracted string $\pi_\mu(x)\in\{0,1\}^n$ coordinate-wise by
\begin{equation}
    \pi_\mu(x)_s = \sum_{t\in B_s}\mu_t x_t, \hspace{1cm} s\in[n].
\end{equation}
Because exactly one $\mu_t$ in each block is nonzero, this sum is a single bit. For $w\in\{0,1\}^n$, the pattern matrix of $f$ is the boolean matrix
\begin{align}
  \PM(N,n,f)_{(\mu,w),x} = f(\pi_\mu(x) \oplus w)
\end{align}
which has $(2c)^n$ rows and $2^N$ columns, where each row is denoted by the pair $(\mu,w)$ and each column is denoted by $x$.
\\
The tuple $(N,n,f)$ gives a unique pattern matrix belonging to this family of matrices. In our work, we use two functions $f$, namely, Minsky-Papert function, which is popular in circuit complexity and $\nand_n$ function.

\paragraph{Minsky-Papert functions}
For positive integers $k_1, k_2$, define\\$\MP_{k_1,k_2}(z): \{0,1\}^{k_1k_2} \to \{0,1\}$ as:
\begin{equation}
    \MP_{k_1,k_2}(z) = \bigwedge_{i=1}^{k_1} \bigvee_{j=1}^{k_2} z_{(i-1)k_2+j}, \qquad z \in \{0,1\}^{k_1k_2}.
\end{equation}

A straightforward observation is that if there is a single-vector representation in dimension $d$ preserving retrieval ordering of relevance matrix $R$, then it implies that $d+1 \ge \sr{R}. \label{eq:signrankbasic}$ For a boolean matrix $M$, we define $\sr{M}=\sr{M^{\pm}}$, where $M^{\pm}_{ij}=(-1)^{M_{ij}+1}$.
Consequently, lower bounds on sign-rank translate directly into lower bounds on the embedding dimension of single-vector representation. We use sign-rank lower bounds proven by ~\cite{rs10} to argue about our relevance matrices:
\begin{lemma}\label{lem:razborov}{\cite[Section~6, proof of Theorem~1.1]{rs10}}
For $n=4m^3$ and $N=17^6n$. The $(N,n,\MP_{m,4m^2})$-pattern matrix
$$ M_{(\mu,w),x}=\MP_{m,4m^2}(\pi_\mu(x) \oplus w)$$
satisfies $\sr{M} = 2^{\Omega(m)}.$
\end{lemma}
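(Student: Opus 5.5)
This is the Razborov–Sherstov theorem, and the plan is to reconstruct their argument: a smooth-orthogonalizing distribution for $\MP_{m,4m^2}$, combined with the pattern-matrix spectral bound and Forster's theorem. The starting point is Forster's generalized bound. If $M^{\pm}$ is a sign matrix and $P$ is a nonnegative matrix (a probability distribution on entries) such that every entry of $P$ has magnitude at least $\gamma/(\text{\#entries})$ on a large fraction of its mass, then
\[
\sr{M}\;\ge\;\Omega\!\left(\frac{\gamma\sqrt{\text{rows}\cdot\text{cols}}}{\|P\circ M^{\pm}\|}\right),
\]
possibly after discarding a small-weight set of entries. So the argument has two parts. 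First, we exhibit a matrix $P\circ M^{\pm}$ with small spectral norm. Second, we show that $P$ is "smooth", i.e.\ close to uniform on most of its mass.

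The first part is the combinatorial core, and I expect it to be the main obstacle. We need a distribution $\mu$ on $\{0,1\}^{n}$, $n=m\cdot 4m^2$, with two properties. First, $\mu\cdot \MP^{\pm}_{m,4m^2}$ is orthogonal to every character of degree $<d$ for some $d=\Theta(m)$, so that the function has high threshold degree in a smooth sense. Second, $\mu(z)\ge 2^{-\Theta(m)}\cdot 2^{-n}$ on most of the mass. Minsky–Papert already gives threshold degree $\Omega(m^{1/3})$. To reach $\Omega(m)$ with smoothness, I will follow Razborov–Sherstov and build the dual witness in two steps. The first step is a univariate dual polynomial for the OR-of-$AND$ structure, obtained from an explicit rational construction (Beigel-type) with a balanced orthogonality/decay tradeoff. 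The second step lifts it through the $4m^2$-fold OR blocks via symmetrization and a smoothing by mixing with the uniform distribution on the ``pseudo-inputs''. The square relation between block width $4m^2$ and the number $m$ of blocks is exactly what makes the error terms summable.

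For the second part, we invoke Sherstov's pattern-matrix spectral lemma. For $\psi:\{0,1\}^n\to\mathbb R$, the pattern matrix $\Psi=[\psi(\pi_\mu(x)\oplus w)]$ has singular values
\[
\sqrt{2^{N+n}(N/n)^n}\,\cdot\,|\hat\psi(S)|\,(n/N)^{|S|/2}
\]
over $S\subseteq[n]$. Taking $\psi=\mu\cdot\MP^{\pm}$ kills all $|S|<d$. Each Fourier coefficient is at most $2^{-n}$, and $(n/N)^{d/2}=17^{-3d}$. Hence
\[
\|\Psi\|\le 17^{-3d}\cdot 2^{-n}\sqrt{2^{N+n}(N/n)^n}.
\]

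Smoothness transfers to the pattern matrix, since each row's entries are $\mu$-values of shifted projections, so the entry-size condition is inherited with the same $2^{-\Theta(m)}$ loss.

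Plugging both parts into Forster gives $\sr{M}\ge 17^{3d}\cdot 2^{-\Theta(m)}$. The constant $17^6$ in $N$ is chosen precisely so that $17^{3d}$ beats the $2^{-O(m)}$ smoothness loss, given $d\ge c\,m$. The result is $2^{\Omega(m)}$. Getting the dual witness's smoothness constant and the degree constant to match this $17$ is the delicate part; everything else is standard.
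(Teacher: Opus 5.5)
The paper does not prove this lemma; it imports it from Razborov--Sherstov. Their argument has exactly your architecture: a smooth orthogonalizing distribution for $\MP_{m,4m^2}$, Sherstov's spectral formula for pattern matrices, and a Forster-type bound. Your pattern-matrix half is correct: $|\hat\psi(S)|\le 2^{-n}$, the factor $(n/N)^{d/2}=17^{-3d}$, and the transfer of pointwise lower bounds from $\mu$ to every row of the pattern matrix.

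The genuine gap is Part~1. The key object is a distribution $\mu$ such that $\mu\cdot\MP^{\pm}$ (with $\MP^{\pm}=(-1)^{\MP_{m,4m^2}+1}$) is orthogonal to all characters of degree $<d=\Theta(m)$ \emph{and} $\mu(z)\ge 2^{-O(m)}2^{-n}$ outside a negligible set. Its existence is the main technical theorem of Razborov--Sherstov, and your proposal does not establish it. The sketch also aims at the wrong target. Minsky--Papert already give threshold degree exactly $m$ for $\MP_{m,4m^2}$; that is $\Omega(n^{1/3})$ in the $n=4m^3$ variables, not $\Omega(m^{1/3})$. So no degree needs to be amplified, and the whole difficulty is smoothness.

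``Mixing with the uniform distribution'' is precisely the step that fails. Suppose $\nu\cdot\MP^{\pm}$ is orthogonal to low-degree characters and $\mu=(1-\delta)\nu+\delta U$. Then $\mathbb{E}_{\mu}[\MP^{\pm}]=\delta\,(1-2\Pr_U[\MP_{m,4m^2}=0])\approx\delta\neq 0$, because the function vanishes on at most an $m2^{-4m^2}$ fraction of inputs. Cancelling this with a nonnegative correction while keeping orthogonality up to degree $\Theta(m)$ is a robust strengthening of Minsky--Papert. By LP duality, an everywhere $\delta$-smooth such $\mu$ (i.e.\ $\mu\ge\delta 2^{-n}$ pointwise) exists iff every polynomial $p$ of degree $<d$ with $p\,\MP^{\pm}\ge -1$ pointwise satisfies $\mathbb{E}_U[p\,\MP^{\pm}]\le(1-\delta)/\delta$. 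Nothing in your Beigel/symmetrization outline proves a statement of this kind. Either supply one or, as the paper does, invoke the Razborov--Sherstov theorem as a black box.

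Second, your smoothness hypothesis is misformulated in a way that makes it vacuous. For any distribution $\mu$ and any $\delta<1$, the set $B=\{z:\mu(z)<\delta 2^{-n}\}$ has $\mu(B)<\delta|B|2^{-n}\le\delta$. So ``$\mu(z)\ge 2^{-\Theta(m)}2^{-n}$ on most of the mass'' holds for every distribution, and the same is true of your condition on $P$ in the Forster step. The exceptional set has to be small in \emph{cardinality}.

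The usable statement comes from Forster's matrix $R$, which has $\mathrm{rank}\,R\le r=\sr{M}$, $|R_{xy}|\le 1$, $\|R\|_F^2=|X||Y|/r$ and $R\circ M^{\pm}\ge 0$. Pairing $R$ with $P\circ M^{\pm}$ gives the following for a probability matrix $P$ with $P_{xy}\ge\gamma/(|X||Y|)$ on all but $h$ entries: $r\ge\gamma/\bigl(\|P\circ M^{\pm}\|\sqrt{|X||Y|}+\gamma h/(|X||Y|)\bigr)$. Note that the main term is $\gamma/(\|P\circ M^{\pm}\|\sqrt{|X||Y|})$, not $\gamma\sqrt{|X||Y|}/\|P\circ M^{\pm}\|$ as you wrote; your final count $17^{3d}2^{-\Theta(m)}$ tacitly uses the correct normalization. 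With the Razborov--Sherstov distribution and this bound, the exceptional fraction is negligible and the rest of your computation goes through. The constant $17^6$ is tuned to their degree and smoothness parameters.
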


\section{Overview of Results}
\label{sec:overview}

Here we state all our main results. In our first theoretical result, we show an exponential separation between the single-vector and multi-vector representations for the retrieval ordering problem.
\begin{restatable}{theorem}{MainTheorem}
     For any integer $m \ge 2$, let $L=4m^2$, $n=mL$ and $N=17^6n$. Then there exists a Boolean relevance matrix $R = \PM(N,n,\MP_{m,L})$ such that any unit-norm single-vector representation which preserves retrieval ordering of $R$ (Eq.~\ref{eq:retrievalordering}) requires an embedding dimension of $d = 2^{\Omega(m)}$. In contrast, there exists a multi-vector representation which preserves retrieval ordering of $R$, using the Chamfer score, with a polynomial representation size of $\mathcal{O}(m^6)$ per query and document, achieving a relevance separation margin of $\Theta(m^{-2})$ between relevant and irrelevant documents. \label{thm:main}
\end{restatable}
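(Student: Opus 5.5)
The plan has two halves: a lower bound by reduction to sign-rank, and an explicit multi-vector construction whose Chamfer score computes $\MP_{m,L}$ with a margin.

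\textbf{Lower bound.} Suppose unit-norm vectors $q_{(\mu,w)}, p_x\in\mathbb R^d$ preserve the retrieval ordering of $R$. For each row, set a threshold $\theta_{(\mu,w)}$ strictly between the maximum irrelevant score and the minimum relevant score. Then the matrix
\[
\langle q_{(\mu,w)},p_x\rangle-\theta_{(\mu,w)} = \langle (q_{(\mu,w)},-\theta_{(\mu,w)}),(p_x,1)\rangle
\]
has the sign pattern of $R^{\pm}$. This gives the observation $d+1\ge \sr{R}$ stated before Lemma~\ref{lem:razborov}. The parameters $L=4m^2$, $n=mL=4m^3$, $N=17^6n$ are exactly those of Lemma~\ref{lem:razborov}, so $d\ge 2^{\Omega(m)}-1=2^{\Omega(m)}$. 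Unit norm plays no role in this direction.

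\textbf{Upper bound.} Write $z=\pi_\mu(x)\oplus w$. The query $(\mu,w)$ has $m$ clauses $i$, each with literals $s\in$ block-group $i$ (there are $L$ of them). Literal $s$ is true iff $x_t\neq w_s$, where $t$ is the coordinate selected by $\mu$ in block $B_s$. I will give each query $L_Q=m$ vectors, one per clause, and each document a fixed set of vectors, with $\langle q_i,p\rangle$ large iff $p$ "witnesses" a satisfied literal of clause $i$.

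\begin{itemize}
\item \emph{Document vectors.} For each position $t\in[N]$, the document has the vector $p_t=e_{(t,x_t)}$, a one-hot vector in dimension $2N$.
\item \emph{Query vectors.} Clause $i$ gets
\[
q_i=\frac{1}{\sqrt L}\sum_{s\in \text{clause } i} e_{(t_s,\,1-w_s)},
\]
where $t_s$ is the coordinate chosen by $\mu$ in $B_s$.
\item \emph{Score per clause.} Then $\max_t\langle q_i,p_t\rangle=1/\sqrt L$ if clause $i$ is satisfied, and $0$ otherwise.
\item \emph{Chamfer score.} The score is $\frac{1}{m}\cdot\frac{1}{\sqrt L}\cdot\#\{\text{satisfied clauses}\}$. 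It equals $1/\sqrt L$ exactly when $\MP=1$, and is at most $\frac{m-1}{m\sqrt L}$ otherwise. The margin is $\frac{1}{m\sqrt L}=\Theta(m^{-2})$, using $\sqrt L=2m$.
\end{itemize}

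All vectors have unit norm or are normalized. The document uses $N$ vectors in dimension $2N$, and the query uses $m$ vectors.

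\textbf{Main obstacle: meeting the $\mathcal O(m^6)$ size budget.} The document size above is $N\cdot 2N=\Theta(m^6)$, since $N=\Theta(m^3)$, which matches the bound. The query size is $m\cdot 2N=\mathcal O(m^4)$. So the one-hot encoding fits, and I expect the delicate part is exactly this bookkeeping of constants ($17^{12}$ is absorbed into the $\mathcal O$). If a smaller dimension were needed, I would first try compressing the one-hot encoding to $\mathcal O(\log N)$ dimensions with nearly orthogonal codes and check that the margin survives.
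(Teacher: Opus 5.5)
Your proposal is correct and follows the paper's proof: the lower bound is the same per-row threshold shift that costs one extra dimension, followed by Lemma~\ref{lem:razborov}, and the upper bound is the same construction with one query vector per clause and one document vector per bit position, giving a Chamfer score that counts satisfied clauses with margin $1/(m\sqrt{L})=1/(2m^2)$. The only difference is cosmetic. The paper uses signed vectors $(-1)^{w_s}e_{t_s}$ and $(-1)^{x_j+1}e_j$ in $\mathbb{R}^N$, so a false literal scores $-1/\sqrt{L}$ and the clause maximum of $0$ comes from an unselected coordinate; your unsigned one-hot encoding in $\mathbb{R}^{2N}$ makes every inner product nonnegative, and doubling the dimension does not affect the $\mathcal{O}(m^6)$ size bound.
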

The lower bound for the single-vector embeddings follows immediately from \cite{wbnl26} and Lemma \ref{lem:razborov}. For completeness, the proof of it is provided in Section~\ref{app:singlelowerbound}. The construction for the multi-vector embedding is provided in Section~\ref{sec:mainresult}. Note that the above theorem does not refute the MUVERA ~\cite{dhr+24} result which provides a single-vector embedding that approximates the multi-vector chamfer similarity scores up to additive $\epsilon$ approximation with $\exp(O(1/\epsilon^2))$ dimensional single-vector embeddings. For these embeddings to preserve the retrieval ordering one would need to choose $\epsilon = \Theta(m^{-2})$. 

Additionally, we remark here that our result is complementary to the result of ~\cite{j26} which provides a lower bound on the dimension of single-vector embedding needed to approximate the multi-vector Chamfer scores. 
In fact, for the relevance matrix constructed by ~\cite{j26}, we further construct a single-vector representation of size $\Theta(N)$, which matches, up to polynomial factors, the $O(N^2)$ size multi-vector representation given in their work.

\begin{restatable}{theorem}{NandTheorem}
The relevance matrix $R = \PM(N,n,\nand_n)$ from ~\cite{j26} admits a unit-norm single-vector representation which preserves retrieval ordering of $R$ in dimension $N$ with a relevance separation margin of $2/(\sqrt{nN})$ between relevant and irrelevant documents. 
\label{thm:NAND-failure}
\end{restatable}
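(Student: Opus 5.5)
The plan is to build the embedding directly, using the fact that $\nand_n$ takes the value $0$ on only one input. By definition, $R_{(\mu,w),x}=\nand_n(\pi_\mu(x)\oplus w)$ equals $0$ exactly when $\pi_\mu(x)\oplus w=\mathbf 1$. Write $t_s\in B_s$ for the coordinate selected by $\mu$ in block $s$. The condition then says that $x_{t_s}=1-w_s$ for every $s\in[n]$. So document $x$ is \emph{irrelevant} to query $(\mu,w)$ exactly when $x$ agrees with the target pattern $b_s:=1-w_s$ on all $n$ selected coordinates. It is relevant as soon as it disagrees on at least one of them. It therefore suffices to find an inner product that is an affine, strictly monotone function of the number $k$ of disagreements.

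\textbf{Documents.} Map each $x\in\{0,1\}^N$ to $p_x=\frac{1}{\sqrt N}(2x-\mathbf 1)\in\mathbb R^N$. Every entry is $\pm 1/\sqrt N$, so $\|p_x\|=1$.

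\textbf{Queries.} Set $\sigma_s=2b_s-1=1-2w_s\in\{\pm1\}$ and map $(\mu,w)$ to
\[
q_{\mu,w}=-\frac{1}{\sqrt n}\sum_{s=1}^n \sigma_s e_{t_s}.
\]
This vector has exactly $n$ nonzero entries, each $\pm 1/\sqrt n$, so it also has unit norm. Both vectors live in $\mathbb R^N$, which gives dimension $N$.

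\textbf{Scores.} The key computation is
\[
\langle q_{\mu,w},p_x\rangle=-\frac{1}{\sqrt{nN}}\sum_{s=1}^n \sigma_s(2x_{t_s}-1).
\]
Each summand is $+1$ when $x_{t_s}=b_s$ and $-1$ when $x_{t_s}\neq b_s$. With $k$ disagreements the score is therefore $-(n-2k)/\sqrt{nN}$.
\begin{itemize}
\item Irrelevant documents ($k=0$) all score exactly $-n/\sqrt{nN}$.
\item Relevant documents ($k\ge1$) all score at least $-(n-2)/\sqrt{nN}$.
\end{itemize}
Hence $\min_{R=1}\text{score}-\max_{R=0}\text{score}\ge 2/\sqrt{nN}$, which gives Eq.~\eqref{eq:retrievalordering} with the claimed margin.

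There is no real obstacle here. The only step needing care is the bookkeeping of the XOR with $w$: the sign $\sigma_s$ has to be chosen so that the unique irrelevant pattern is the one minimizing the score, not maximizing it. Once the sign convention is fixed, the rest is a one-line calculation.
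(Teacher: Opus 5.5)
Your proof is correct and is essentially the paper's construction. Your query and document vectors are exactly the negatives of the paper's $q_{\mu,w}$ and $p_x$, so every inner product is unchanged, and your count $k$ of disagreements with $\mathbf 1 - w$ is the paper's count $r$ of agreements with $w$. This gives the same scores $(2k-n)/\sqrt{nN}$ and the same margin $2/\sqrt{nN}$.
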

Thus, while the hard instance in their work successfully demonstrates an exponential gap in pointwise Chamfer score preservation, it does not answer the questions related to retrieval ordering problem.

Motivated by our theoretical findings, we introduce ANDOR, a retrieval benchmark designed to expose the expressiveness gap between single-vector and multi-vector retrieval. As described in Section~\ref{sec:dataset}, we translate the Boolean retrieval problem used in our theoretical construction into an e-commerce retrieval task by representing variables as product attributes and clauses as user preferences. The dataset is meant to emulate a complicated multi-faceted search task, that appears often in retail. The vocabulary consists of 20 product categories, each containing 20 distinct attributes, and the benchmark evaluates a model's ability to capture compositional AND-of-OR semantics, where a relevant document must satisfy multiple mandatory attribute constraints specified in the query. The corpus contains 50,000 carefully constructed products together with challenging hard negatives that violate only a small number of constraints, preventing models from succeeding through partial matches or simple heuristics. Finally, we vary the query width to systematically study how retrieval performance changes as the underlying Boolean reasoning becomes more complex. Further details are provided in Section~\ref{sec:dataset}.

In Section~\ref{sec:experiments} and Appendix~\ref{sec:app-empirical}, we summarize the results of extensive experiments on ANDOR using wide range of models and varying ablation settings. We observe that while zero-shot performance is consistently poor, multi-vector models maintain a significant lead. Fine-tuning helps all models, but multi-vector models maintain a significant lead in all cases; post fine-tuning relative margin ranges from up to 153\% for Recall@2 to up to 96\% for Recall@100. Most strikingly, we note that even in case of the Jina v4 model, where multi and single-vector embeddings are co-trained under identical circumstances, the gap remains, pointing to a fundamental gap in representation and not merely an artifact of training or hyperparameter settings.
\section{Proofs}\label{sec:mainresult}
In this section, we focus on constructing a polynomial-size multi-vector representation, and provide a proof of Theorem~\ref{thm:main}. At the end, we will show via construction that the $\nand_n$ pattern matrix from ~\cite{j26} admits a unit norm single-vector representation that matches, up to polynomial factors, the multi-vector that they show in their work. 

We begin with the setup first. Our query will be a composition of several clauses. $m$ is the number of clauses, or independent conditions, that a document must satisfy, and $L$ is the number of ways each clause can be satisfied; a document will be relevant to a query only iff \emph{every} one of the $m$ clauses is satisfied by at least one of its $L$ literals. The quantity $n=mL$ is the total number of logical bit positions needed to describe one instance of this clause structure, arranged as an $m\times L$ array. We will refer to the array and its rows and columns in later steps in the construction. \\
We denote a document by $x$ and a query by the pair $(\mu,w)$ as defined earlier in Section~\ref{sec:preliminaries}. Let $y \in \{0,1\}^n$ be defined as the following:
\begin{align}
 y&=\pi_\mu(x)\oplus w
\end{align}
We use $R=\PM(N,n,\MP_{m,L})$ as our relevance matrix:
\begin{equation}
 R_{(\mu, w),x} = \MP_{m,L}(y). \label{eq:MPdef}
\end{equation}
where $m \ge 2$ is any integer and $L=4m^2$, $n=mL=4m^3$ and $N=17^6n=4 \cdot 17^6m^3$ for $c=17^6$.
\\ \\
In the remainder of this section, we focus on the multi-vector construction. Specifically, we construct an exact multi-vector realization of the same relevance matrix and show that the resulting multi-vector representation has polynomial size. Let us first arrange $y = \pi_\mu(x) \oplus w$ in a $m \times L$ matrix:
\begin{equation}
\begin{pmatrix}
y_1 & y_2 & \cdots & y_L\\
y_{L+1} & y_{L+2} & \cdots & y_{2L}\\
\vdots & \vdots & \ddots & \vdots\\
y_{(m-1)L+1} & y_{(m-1)L+2} & \cdots & y_{mL}
\end{pmatrix} 
\in\{0,1\}^{m\times L}. \label{eq:f-matrix}
\end{equation}
We will refer to this boolean matrix as $Y(\mu,w,x)$, and use $Y$ to distinguish the matrix from the function that follows the same structure. The $\MP_{m,L}(y)$ takes the OR of each row of $Y$ and then finally take the AND of the obtained $m$ values. Thus, the function returns $1$ if and only if each row of $Y$ contains at least one $1$. So each element of $Y$ is a literal and each row corresponds to a clause. For instance, if we say clause $i$ is satisfied, it means the $i$-th row of $Y$ contains at least a one $1$. We will work in $\mathbb R^{N}$, endowed with the standard basis $\{e_1,\ldots, e_N\}$ and inner product. For every query $(\mu,w)$, define a collection of $m$ vectors 
\begin{equation}
    Q(\mu,w) = \{q_i: i \in [m]\}, 
\end{equation}
where the vector $q_i$ is associated with row $i$ of $Y$. Intuitively, it can be imagined as encoding the $i$-th clause, with a sign bit to encode the bit value, for every coordinate $\mu$ chooses in the document for these $L$ literals in the clause and normalized to be unit norm. The sign bit is masked using $w$.
\begin{equation}
    q_i = \frac{1}{\sqrt{L}} \sum_{s=(i-1)L+1}^{iL} \left((-1)^{w_s} \sum_{t \in B_s}\mu_te_t\right). \label{eq:mvq}
\end{equation}
For every document $x$, define one vector for each bit $i$ :
\begin{equation}
    P(x) = \{p_j: j \in [N]\}. \label{eq:mvd}
\end{equation}
\begin{equation}
    p_j = (-1)^{x_j+1}e_j.
\end{equation}

In the following paragraphs, we first compute the relevant inner products and then show that the maximum associated with clause $i$ is $1/\sqrt{L}$ when that clause is satisfied and $0$ otherwise. Consequently, the Chamfer score is $1/\sqrt{L}$ when all clauses are satisfied and at most $(m-1)/(m\sqrt{L})$ when at least one clause is unsatisfied. This yields a gap between relevant and irrelevant pairs and allows us to choose a separating threshold. First, an easy lemma.

\begin{lemma}
For $s \in [n]$, let $v_s$ be the unique $t \in B_s$ satisfying $\mu_t=1$. For every $i\in[m]$, let $I_i = \{v_s: (i-1)L+1 \le s \le iL\}$. Then,
\begin{equation}
\langle q_i,p_r\rangle
=
\begin{cases}
+1/\sqrt{L}, & r=v_s \in I_i,y_{s}=1,\\
-1/\sqrt{L}, & r=v_s \in I_i,y_{s}=0,\\
0, & r \notin I_i,
\end{cases}
\end{equation}

\label{lem:innerproduct}
\end{lemma}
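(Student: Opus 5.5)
The plan is a direct computation, expanding $q_i$ in the standard basis and pairing it with the signed basis vector $p_r$. By the selector property, for each $s$ the inner sum $\sum_{t\in B_s}\mu_t e_t$ equals $e_{v_s}$. Hence the query vector becomes
\begin{equation*}
q_i=\frac{1}{\sqrt L}\sum_{s=(i-1)L+1}^{iL}(-1)^{w_s}e_{v_s}.
\end{equation*}
The blocks $B_s$ are pairwise disjoint, so the indices $v_s$ are distinct across $s$. Therefore $q_i$ has exactly the $L$ nonzero coordinates indexed by $I_i$, each of magnitude $1/\sqrt L$, and no two terms ever collide on the same coordinate.

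Next I pair this with $p_r=(-1)^{x_r+1}e_r$ and use orthonormality of the standard basis, $\langle e_{v_s},e_r\rangle=\mathbf 1[r=v_s]$. If $r\notin I_i$, every term vanishes and the inner product is $0$. If $r=v_s\in I_i$, distinctness of the $v_s$ says that exactly one $s$ in the range contributes. That gives
\begin{equation*}
\langle q_i,p_r\rangle=\frac{1}{\sqrt L}(-1)^{w_s}(-1)^{x_{v_s}+1}=\frac{1}{\sqrt L}(-1)^{w_s+x_{v_s}+1}.
\end{equation*}

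Finally I identify the sign with $y_s$. By definition $\pi_\mu(x)_s=\sum_{t\in B_s}\mu_t x_t=x_{v_s}$, so $y_s=x_{v_s}\oplus w_s\equiv x_{v_s}+w_s \pmod 2$. The exponent $w_s+x_{v_s}+1$ is therefore even exactly when $y_s=1$, which gives $+1/\sqrt L$. It is odd exactly when $y_s=0$, which gives $-1/\sqrt L$. This matches all three cases of the statement.

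There is no real obstacle in this argument. The only point needing care is the well-definedness of the case split: a given $r\in I_i$ corresponds to a unique $s$. This follows from the disjointness of the blocks, which is what makes both the indexing by $s$ and the $\pm$ sign unambiguous.
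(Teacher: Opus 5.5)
Your proposal is correct and follows essentially the same route as the paper: expand $q_i$ in the standard basis, split on whether $r\in I_i$, and identify the sign $(-1)^{w_s+x_{v_s}+1}=(-1)^{y_s+1}$. Your explicit remark that the blocks are disjoint, so the $v_s$ are distinct and each $r\in I_i$ determines a unique $s$, is a point the paper uses only implicitly, and it makes the case split well-defined.
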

\begin{proof}
    We consider two cases whether $r$ is in $I_i$ or not.\\
    \textbf{Case 1:} $r=v_s \in I_i$\\
    In this case, the non zero entry of the document vector $p_r$ is at the index $r$ and since $\mu_r=1$, the index $r$ is nonzero for $q_i$. It therefore follows that
    \begin{equation}
    \begin{split}
        \langle q_i,p_r\rangle        
        &=\frac{1}{\sqrt{L}}(-1)^{w_s}(-1)^{x_{v_s}+1}\\
        \langle q_i,p_{v_s}\rangle &=\frac{1}{\sqrt{L}}(-1)^{(w_s \oplus x_{v_s})+1}=\frac{1}{\sqrt{L}}(-1)^{y_s+1}
    \end{split}
    \end{equation}
    \begin{equation}
    \text{Thus, } \langle q_i,p_{v_s}\rangle =
    \begin{cases}
    +1/\sqrt{L}, & y_{s}=1,\\
    -1/\sqrt{L}, & y_{s}=0,
    \end{cases}
    \end{equation}
    \textbf{Case 2:} $r \notin I_i$\\
    In this case, the index $r$ is zero in $q_i$ and therefore $\langle q_i,p_r\rangle=0$. 
\end{proof}
Consequently, as $m\geq2$,

\begin{equation}
\max_{p\in P(x)}\langle q_i,p\rangle =
\begin{cases}
1/\sqrt{L}, & \hfill\textrm{if } \displaystyle\bigvee_{j=1}^{L}y_{(i-1)L+j}=1,\\
0, & \hfill\textrm{if } \displaystyle\bigvee_{j=1}^{L}y_{(i-1)L+j}=0.
\end{cases}
\end{equation}
We now have enough to describe the behavior of the Chamfer score.
\begin{lemma}
    The Chamfer score satisfies
    \begin{equation}
        S(Q(\mu,w),P(x)) 
        \begin{cases}
        = \dfrac{1}{\sqrt{L}} \hspace{1cm}\textrm{if } \MP_{m,L}(y) =1 \\
        \leq \dfrac{m - 1}{m\sqrt{L}}  \hspace{1cm} \textrm{o.w. } \\
    \end{cases} 
    \label{lem:Chamfer}
    \end{equation}\label{lem:Chamfer}
\end{lemma}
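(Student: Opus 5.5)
The plan is to unpack the Chamfer score in Eq.~\eqref{eq:Chamfer} as an average of per-clause maxima and then feed in the case analysis displayed right after Lemma~\ref{lem:innerproduct}. Since $|Q(\mu,w)| = m$, we have
\begin{equation*}
S(Q(\mu,w),P(x)) = \frac{1}{m}\sum_{i=1}^{m} M_i, \qquad M_i := \max_{p\in P(x)}\langle q_i,p\rangle .
\end{equation*}
So everything reduces to knowing each $M_i$. I will first re-justify the displayed formula for $M_i$ from Lemma~\ref{lem:innerproduct}. The inner products $\langle q_i,p_r\rangle$ take only the values $\pm 1/\sqrt{L}$ (for $r\in I_i$) and $0$ (for $r\notin I_i$).

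The one point needing care is that some index $r\notin I_i$ exists, so that the value $0$ is actually attained. Because $|I_i| = L$ and $N = cn = cmL > L$ (here $m\ge 2$ suffices, and in fact $c\ge 2$ alone suffices), there is always some $r\in[N]\setminus I_i$.

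With that in hand, the two cases for $M_i$ follow.
\begin{itemize}
\item If clause $i$ is satisfied, some $s$ in row $i$ has $y_s = 1$. Then $p_{v_s}$ gives inner product $1/\sqrt{L}$, which is the largest possible value, so $M_i = 1/\sqrt{L}$.
\item If clause $i$ is unsatisfied, every $r\in I_i$ gives $-1/\sqrt{L}$, and the indices outside $I_i$ give $0$. Hence $M_i = 0$.
\end{itemize}

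Now sum over $i$. If $\MP_{m,L}(y)=1$, every clause is satisfied, so every $M_i = 1/\sqrt{L}$ and the average is exactly $1/\sqrt{L}$. Otherwise at least one clause is unsatisfied and contributes $0$. Each remaining term is at most $1/\sqrt{L}$, so the sum is at most $(m-1)/\sqrt{L}$ and $S \le (m-1)/(m\sqrt{L})$.

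No step here is a real obstacle; the argument is bookkeeping. The only subtle point is the existence of an index outside $I_i$, which pins the unsatisfied-clause maximum at $0$ rather than $-1/\sqrt{L}$. That distinction does not affect the upper bound, since either value is at most $0$. It does matter for the exactness of the displayed per-clause formula.
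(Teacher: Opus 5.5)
Your proposal is correct and follows the same route as the paper. You write the Chamfer score as the average over the $m$ query vectors of the per-clause maxima, use Lemma~\ref{lem:innerproduct} to get each maximum equal to $1/\sqrt{L}$ (satisfied clause) or $0$ (unsatisfied clause), and then average; your explicit check that some index lies outside $I_i$ is what the paper compresses into ``as $m\geq 2$'', and as you note, the upper bound does not need it.
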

\begin{proof}
    The proof follows trivially from Lemma~\ref{lem:innerproduct}: if $\MP_{m,L}(y) = 1$, every row of $Y$ contains at least one $1$. Each row contributes exactly $1/\sqrt{L}$ to the Chamfer score, giving the first case. The other case implies at least some row yields a $0$, which means that the Chamfer score can at most be $1/\sqrt{L} - 1/m\sqrt{L}$, which resolves to the other value after algebraic rearrangement.
\end{proof}
A solution is to pick a threshold $\gamma$ that sits in between these two values. Choosing $\gamma = (m-1/2)/(m\sqrt{L})$, we obtain
\begin{equation}
    S(Q(\mu,w),P(x))>\gamma \Longleftrightarrow \MP_{m,L}(\pi_\mu(x)\oplus w)=1,
\end{equation}
implying an exact reconstruction of the relevance matrix $R$. To prove the main theorem, we still need a lower bound on the size of the single vectors for the retrieval ordering problem.

\subsection{A Lower Bound for Single Vectors} \label{app:singlelowerbound}
\begin{restatable}{lemma}{LowerBoundTheorem} \label{lem:singlevectorlowerbound}
    Let $R$ be the relevance matrix given as $\PM(N,n,\MP_{m,4m^2})$ with $n=4m^3$ and $N=17^6n$ as defined above. Then for any single-vector representation in dimension $d$ which preserves retrieval ordering of $R$, it implies that $d = 2^{\Omega(m)}$.
\end{restatable}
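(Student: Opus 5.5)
The plan is to reduce to the sign-rank bound of Lemma~\ref{lem:razborov}, using the observation stated in Section~\ref{sec:preliminaries} that a dimension-$d$ single-vector representation preserving the retrieval ordering gives a sign pattern of rank at most $d+1$. I will make that observation explicit here, since it carries all of the argument.

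Suppose vectors $q_{(\mu,w)}\in\mathbb R^d$ and $p_x\in\mathbb R^d$ satisfy Eq.~\eqref{eq:retrievalordering} for $R$. For each query row $(\mu,w)$ we pick a threshold $\tau_{(\mu,w)}$ strictly between the two sides of \eqref{eq:retrievalordering}. A convenient choice is the midpoint of $\min_{x:R=1}\langle q,p_x\rangle$ and $\max_{x:R=0}\langle q,p_x\rangle$. If a row is all-ones or all-zeros, we choose any threshold below the minimum or above the maximum respectively.

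Next we lift to dimension $d+1$ by setting $\tilde q_{(\mu,w)}=(q_{(\mu,w)},-\tau_{(\mu,w)})$ and $\tilde p_x=(p_x,1)$. Then
\[
\langle \tilde q_{(\mu,w)},\tilde p_x\rangle=\langle q_{(\mu,w)},p_x\rangle-\tau_{(\mu,w)},
\]
and this is strictly positive exactly when $R_{(\mu,w),x}=1$ and strictly negative exactly when $R_{(\mu,w),x}=0$. So the real matrix $A$ with entries $A_{(\mu,w),x}=\langle \tilde q_{(\mu,w)},\tilde p_x\rangle$ has rank at most $d+1$ and has sign pattern $R^{\pm}$. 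By definition this gives $\sr{R}\le d+1$.

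Finally, $R$ is exactly the $(N,n,\MP_{m,4m^2})$-pattern matrix of Lemma~\ref{lem:razborov}, with $n=4m^3$ and $N=17^6 n$. That lemma gives $\sr{R}=2^{\Omega(m)}$, hence $d\ge 2^{\Omega(m)}-1=2^{\Omega(m)}$.

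The unit-norm condition in Theorem~\ref{thm:main} is not needed for the lower bound, since the argument never uses norms. The main care points are that the thresholds are strict, so no zero entries appear, and that degenerate rows are handled. The real difficulty sits entirely inside Lemma~\ref{lem:razborov}, which we take as given.
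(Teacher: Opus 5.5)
Your proposal is correct and follows essentially the same route as the paper: the paper forms $T = G - \tilde\gamma\mathbf{1}^\top$ with a per-query threshold $\tilde\gamma_{\mu,w}$, which has rank at most $d+1$ and sign pattern $R^{\pm}$. This is exactly your lift $(q,-\tau)$, $(p,1)$, and your midpoint choice of $\tau$ plays the role of the paper's $\varepsilon$-perturbation in ruling out zero entries, after which both conclude $d \ge \sr{R} - 1 = 2^{\Omega(m)}$ via Lemma~\ref{lem:razborov}.
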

\begin{proof}
    Let a query $(\mu,w)$ and a document $x$ be encoded by embeddings $a_{\mu,w}, b_x \in \mathbb R^d$, where $R$ is the relevance matrix. Let the query and document matrices be denoted therefore as $A$ and $B$ respectively, and the inner product matrix as $G= AB^\top$. Clearly, rank$(G)\leq d$. Assume a (possibly query dependent) threshold $\gamma_{\mu,w}$ such that:
    \begin{equation}
        R_{(\mu,w),x} = 1 \Longleftrightarrow G_{(\mu,w),x} \coloneq \langle a_{\mu,w}, b_x \rangle > \gamma_{\mu,w}.
    \end{equation}
    We must construct a signrank witness out of $G$, and therefore the witness cannot have zero entries. To ensure this, we define for each query $(\mu,w)$, a small, finite perturbation $\eps_{\mu,w} > 0$ such that no relevant query document pair $((\mu,w),x^+)$ satisfies $G_{(\mu,w),x^+}\leq \gamma_{\mu,w} +\eps_{\mu,w}$, where $x^+$ denotes positive documents. Let $\tilde \gamma_{\mu,w} \coloneqq \gamma_{\mu,w} + \eps_{\mu,w}$, and $\tilde \gamma$ the vector composed of them. Define the witness
    \begin{equation}
        T \coloneqq G -\tilde \gamma\mathbf1^\top.
    \end{equation}
    The entries of $T$ therefore exactly satisfy the condition that they are positive for relevant pairs and negative for irrelevant pairs, and thus is a valid witness to the signrank of $R$. The second term has rank at most $1$, following from the rank of the all ones matrix. Therefore, it follows that $\rank (T)\leq \rank(G) + 1 \leq d + 1.$ From Lemma~\ref{lem:razborov}, it implies that $\signrank(R) = 2^{\Omega(m)}$. Hence, $d \geq \signrank(R) - 1 = 2^{\Omega(m)}.$
\end{proof}

\subsection{Proof of Theorem~\ref{thm:main}}
The proof of the main theorem follows from the aforementioned lemmas. For completeness, we restate the theorem here.
\MainTheorem*
\begin{proof}
Recall that our relevance matrix is the one defined in Eqn~\ref{eq:MPdef}. We spell out the proof below.
    \paragraph{Space.} The lower bound on the size of the single-vector representation follows directly from Lemma~\ref{lem:singlevectorlowerbound}. For the multi-vector construction described earlier in the section, we produce vectors in $N$ dimensional space: a query is represented by a set of $m$ vectors, and a document by a set of $N = 4cm^3$ vectors. Therefore, the representation size is $O(N^2)=O(m^6)$ and thus polynomial in $m$, in sharp contrast to the exponential lower bound on single vectors. 
    \paragraph{Separation. } It remains to show what gap our representation achieves in between relevant and irrelevant documents. This follows immediately from our proof in Lemma~\ref{lem:Chamfer}; relevant and irrelevant documents must have a gap of at least $1/m\sqrt{L}$. Plugging in the values for $L$, it follows that the gap is $1/2m^2 = \Theta(m^{-2}).$
\end{proof}
\subsection{The $\nand_n$ Pattern Matrix admits a linear single-vector representation}\label{app:nand}
As a final theoretical result, we show that the $\nand_n$ pattern matrix defined in ~\cite{j26} and the associated multi-vector construction does not solve the retrieval-ordering problem we consider in this work.
\NandTheorem*
We now construct a single-vector representation for $R_{(\mu,w),x}=\PM(N,n,\nand_n)$ and prove Theorem~\ref{thm:NAND-failure}. We retain our convention that $(\mu,w)$ indexes queries and $x$ indexes documents. Our vectors lie in $\mathbb R^{N}$. For query $(\mu,w)$, define 

\begin{equation}
q_{\mu,w} = \dfrac{1}{\sqrt n} \sum_{s=1}^{n}\sum_{t\in B_s}\mu_t (-1)^{w_s} e_t.
\end{equation}
For each block $s$, the selector chooses one coordinate $t$, and the query vector uses $e_t$ when $w_s=0$ and $-e_t$ when $w_s=1$. There are $n$ non-zero entries in query vector so we divide by $\sqrt{n}$ to normalize it to a unit vector. Similarly, pool the documents as

\begin{equation}
p_{x} = \dfrac{1}{\sqrt N}\sum_{t=1}^{N}(-1)^{x_t}e_t.
\end{equation}
For each position $t$, the coordinate $e_t$ records the bit $x_t$: the summand uses $e_t$ when $x_t=0$ and $-e_t$ when $x_t=1$. We divide by $\sqrt N$ to normalize it to a unit vector.
\\ \\
Consequently, contribution of index $t$ to the inner product $\langle q_{\mu,w}, p_x \rangle$ is $1$ exactly when the selected bit $x_t$ agrees with $w_s$ and $-1$ if the selected bit $x_t$ disagrees with $w_s$. Contribution of $t$ to the inner product is $0$ for unselected bit $x_t$. Let
\begin{equation}
r = | \{s\in[n]:\pi_\mu(x)_s=w_s\}|
\end{equation}
be the number of selected coordinates on which $x$ agrees with the mask $w$. By construction,

\begin{equation}
\langle q_{\mu,w},p_x\rangle = \dfrac{2r-n}{\sqrt{nN}}.
\end{equation}
The $\nand$ value is zero exactly when every selected bit disagrees with the mask, equivalently when $r=0$; it is one exactly when $r\geq1$. Thus,
\begin{equation}
\langle q_{\mu,w},p_x\rangle
\begin{cases}
\ge (2-n)/\sqrt{nN}, & \nand_n(\pi_\mu(x) \oplus w)=1,\\
= -n/\sqrt{nN}, & \nand_n(\pi_\mu(x) \oplus w)=0.
\end{cases}
\label{eq:nandseparator}
\end{equation}
Therefore, by setting a common threshold

\begin{equation}
\gamma = \dfrac{1-n}{\sqrt{nN}},
\end{equation}
we have
\begin{align}
\langle q_{\mu,w},p_x\rangle>\gamma \hfill \Longleftrightarrow \hfill \nand_n(\pi_\mu(x) \oplus w)=1.
\end{align}
We now provide proof of Theorem~\ref{thm:NAND-failure}.

\begin{proof}[Proof of Theorem~\ref{thm:NAND-failure}]
From Eq.~\ref{eq:nandseparator}, an irrelevant pair has score $-\sqrt{n/N}$, whereas the smallest relevant score is $(2-n)/\sqrt{nN}$. Thus the gap between relevant and irrelevant document is at least $2/\sqrt{nN}$. The representation preserves the threshold exactly like in our case and has a relevance separation margin of $2/\sqrt{nN}$ which in the notation used in ~\cite{j26} translates to $2/\sqrt{mk}$. The size of our single-vector representation is $\Theta(N)$ and the multi-vector representation used in the work of ~\cite{j26} uses one vector per query and $N$ vectors per document with each vector being in dimension $N$ and hence the size of their multi-vector representation is $O(N^2)$. Thus, our single-vector representation matches up to polynomial factors the size of their multi-vector representation.
\end{proof}
\section{The ANDOR dataset}\label{sec:dataset}
\begin{figure*}[t]
\centering
\captionsetup[subfigure]{font=footnotesize,justification=raggedright,singlelinecheck=false,skip=2pt}
\captionsetup{font=small}

\newlength{\QueryW}
\setlength{\QueryW}{0.23\textwidth}
\newlength{\MidGap}
\setlength{\MidGap}{0.02\textwidth}
\newlength{\StackW}
\setlength{\StackW}{0.9\dimexpr\textwidth-\QueryW-\MidGap\relax}

\begin{tabular}{@{}p{\QueryW}@{\hspace{\MidGap}}p{\StackW}@{}}
\vspace{0pt}
\subcaptionbox{Query with mandatory constraints.}[ \linewidth ]{%
\begin{tcolorbox}[exbox,colback=blue!4,colframe=blue!65!black]
\fontsize{6.8}{8.0}\selectfont
\textbf{Query.} Olinda Posso is shopping for a product. They are flexible within each requirement, but every requirement is mandatory. Color options include black, cream, brown, or purple; Material options include denim, polyester, modal, satin, or chiffon; Fit options include regular, cropped, tapered, bootcut, or flared; Size range options include medium, petite size, plus size, or one size; Sleeve length options include sleeveless, cap sleeve, elbow sleeve, detachable sleeve, or one shoulder; Pattern options include striped, plaid, floral, animal print, paisley, or houndstooth; Occasion options include office, lounge, beach, formal, semi formal, date night, ceremony, or hiking; Season options include spring, summer, year round, cold weather, daytime, or rainy day; Target user options include men, teens, toddlers, tall user, or seniors. Find products that satisfy all of these requirements.
\end{tcolorbox}
}%
&
\vspace{0pt}
\parbox[t]{\linewidth}{%
\subcaptionbox{Positive example: satisfies every queried category.}[ \linewidth ]{%
\begin{tcolorbox}[exbox,colback=green!4,colframe=green!60!black]
\fontsize{6.8}{8.0}\selectfont
\textbf{Positive.} It features hoodie, jacket, and vest in linen, viscose, canvas, and chiffon and black and pink, with striped and animal print and smooth, quilted, structured, and insulated. It is intended for seniors and office workers and fits slim, regular, and flared. Additional attributes: size range tall size and one size; sleeve length cap sleeve, elbow sleeve, and strapless; closure zip front, side zip, and no closure; occasion workout, date night, and hiking; season dry weather, daytime, and vacation; price tier value, midrange, and luxury; style preppy, rustic, glam, and formal style; accessory compatibility cardigan layer, leggings pairing, and formal shoe match.
\end{tcolorbox}
}

\vspace{3mm}

\subcaptionbox{Hard negative (level 1): fails only the occasion category.}[ \linewidth ]{%
\begin{tcolorbox}[exbox,colback=yellow!5,colframe=yellow!55!black]
\fontsize{6.8}{8.0}\selectfont
\textbf{Level-1 hard negative.} It is a shopping item from brand cedar fashion, moss and mill, and velvet arc. It is intended for men, maternity user, tall user, and students and fits high waist, tapered, and petite. Additional attributes: color charcoal, cream, and gold; material linen, spandex, and chiffon; size range petite size and eu 36; sleeve length sleeveless, balloon sleeve, and strapless; pattern floral, houndstooth, and lace pattern; occasion casual, school, dinner, and resort; season spring, monsoon, and daytime; accessory compatibility blazer layering, cap match, cardigan layer, and leggings pairing.
\end{tcolorbox}
}

\vspace{3mm}

\subcaptionbox{Ordinary negative: fails multiple queried categories.}[ \linewidth ]{%
\begin{tcolorbox}[exbox,colback=red!4,colframe=red!65!black]
\fontsize{6.8}{8.0}\selectfont
\textbf{Easy negative.} The item is available in charcoal, green, and gold, made with rayon, fleece, and satin, and has waffle, brushed, and embroidered texture. It works well for beach, hiking, commute, and resort, especially in hot weather and windy weather or heavy layer and cozy conditions. It sits in the luxury, investment, and handcrafted range. Additional attributes: garment type jacket and trousers; size range small, xxl, plus size, and maternity size; pattern solid, abstract, and herringbone; care instruction machine wash and low maintenance; sustainability biodegradable, carbon neutral, plastic free, and ethically made; style edgy, romantic, and chic; target user professionals, dancers, and party goers.
\end{tcolorbox}
}
} \\
\end{tabular}

\caption{Example query with mandatory constraints across nine categories (AND-of-OR semantics) on the left. On the right, we provide one matching document (positive), one near-miss (level-1 hard negative), and one broad mismatch (easy negative).}
\label{fig:worked-scoring-example}
\end{figure*}
We now present the ANDOR dataset, which incorporates these ideas into a hard setting for retrieval. In this setting, we design a dataset meant to emulate shopping queries in e-commerce. The queries are designed to mimic customer preferences in shopping for clothing: we imagine twenty filter groups for items on the inventory, and each filter has twenty possible options. We also refer to these filters as \emph{categories}, and the possible values inside each category as an attribute. Shopping queries naturally specify mandatory categories while allowing multiple acceptable values within each category.

The relevance rule is defined by the Minsky-Papert function
\begin{equation}
    f(y)=\bigwedge_{i=1}^{m}\bigvee_{j=1}^{L} y_{(i-1)L + j}.
\end{equation}

Interpret each variable $y$ as a Boolean indicator that the document satisfies the $j$-th acceptable attribute in category $i$. The ANDOR benchmark is obtained by replacing these abstract Boolean variables with semantic product attributes. Each row $i$ now corresponds to a product category (such as color or material), while the literals within that row correspond to the acceptable attribute values for that category. For a query $q$ and document $d$, define $y_{ij}=1$ iff the document contains the $j$-th accepted attribute for category $i$ requested by the query.

Under this interpretation, $\bigvee_{j=1}^{L} y_{(i-1)L + j}=1$ exactly when the document matches at least one acceptable value in
category $i$, while $\bigwedge_{i=1}^{m}\bigvee_{j=1}^{L} y_{(i-1)L + j}=1$ exactly when every queried category is satisfied. Consequently, the retrieval rule implemented by the benchmark is the same AND-of-OR Boolean function used in our theoretical construction, differing only in the semantic interpretation of the variables. Consider the category \texttt{color}, which can take any of twenty values below:

\begin{tcolorbox}[width=\linewidth, colback=blue!5, colframe=blue!75!black,  halign=left, arc=2mm]
Black, White, Navy, Charcoal, Cream, Beige, Brown, Red,
Burgundy, Pink, Lavender, Purple, Blue, Teal, Green, Olive,
Yellow, Orange, Silver, Gold
\end{tcolorbox}

Then, the query clause corresponding to color achieves $y_{\texttt{color},\texttt{black}} = 1$ for queries that ask for the color \texttt{black}. 
We provide the full list categories and attributes in the appendix~\ref{sec:app-dataset}. The dataset includes 50,000 individual documents. Each document bears the description of a product, each with 11 to 15 categories and 2 to 4 attributes per category. For most of the settings, we maintain closeness to the existing LIMIT dataset~\cite{wbnl26}.

 \subsubsection*{Assigning Relevance.}

Let $C_q$ denote the categories specified by query $q$, $A_q(c)$ the set of acceptable attribute values for category $c$, and $V_d(c)$ the set of values present in document $d$. A document is relevant iff it satisfies every queried category by matching at least
one acceptable value:
\begin{align}
    d \textrm{ is relevant to } q &\iff \forall c \in C_q,\hfill V_d(c) \cap A_q(c) \ne \varnothing,
\end{align}
the exact semantic realization of the Boolean function above where each category corresponds to one disjunctive clause of the conjunction and each acceptable attribute corresponds to one literal within the disjunction. Categories are therefore combined by AND, while accepted attribute values within each category are combined by OR.

We clarify two important parameters for our experiments.
\begin{asparaitem}
    \item \emph{Train Width:} The mean number of attributes in each category of a query used to fine-tune the model.
    \item \emph{Test Width:} The mean number of attributes in each category of a query in the test dataset.
\end{asparaitem}
These are essentially the mean length of a clause in a query in training and testing scenarios. For each, we consider a suite identified by the width. In our experiments, this width is of the form $k+0.5$, where $k$ is an integer and the range of clause lengths in that suite varies from $k-1$ to $k+3$. Each suite in test has an equal number of queries with 7, 8 or 9 categories, and train has an equal number of queries with 5, 6, or 7 categories. For fine-tuning, the training width varies from 5.5 to 9.5; in test, this number goes from 3.5 to 11.5. We identify a query suite with this number. Our dataset contains 1000 test and 800 train queries for each setting. Our testing setting is sparse: each query has exactly two positive documents. For fine-tuning, we adopt a dense training scenario, where we have far more positives per query. The mean number of positives for fine-tuning ranges from $\sim208$ for the 5.5 suite, to $\sim1186$ for the 9.5 suite. Our investigation showed that dense fine-tuning outperforms sparse training. In the generation process, we actively use collision repair to maintain the exact number of positives for test queries. We elaborate on the relevance labels in Appendix ~\ref{sec:app-dataset}.

We consider three levels of hard negatives per query, denoted by the number of categories that they fail to satisfy; level $k$ hard negatives violate exactly k queried categories.
The document corpus contains 2000 designated test positives and 38,780 shared hard negatives across all the 9 test suites as described in Appendix \ref{sec:app-dataset}. The remaining documents are generated as random distractors. Relevance judgments are computed directly from the AND-of-OR rule over the corpus. Refer to Figure~\ref{fig:worked-scoring-example} for an explicit example.

ANDOR therefore provides a natural retrieval benchmark whose relevance labels are generated by the same logical predicate used in our theoretical construction, while remaining interpretable as realistic faceted shopping queries. In the next section, we evaluate performance on this dataset across multiple models and different settings to explore the limits of different models. The dataset and the official code will be released shortly. 
\section{Experiments}\label{sec:experiments}

\begin{table}[t]
\centering
\small
\resizebox{0.7\columnwidth}{!}{%
\begin{tabular}{llccc}
\toprule
Model & Type & Dim. & Open & FT \\
\midrule
GTE ModernColBERT v1~\cite{chaffin25} & Multi-vector  & 128/token & \cmark & \cmark \\
Jina Embeddings v4~\cite{gsm+25}      & Multi-vector  & 128/token & \cmark & \cmark \\
Jina Embeddings v4~\cite{gsm+25}      & Single-vector & 2048      & \cmark & \cmark \\
Qwen3 Embedding 0.6B~\cite{zlz+25}    & Single-vector & 1024      & \cmark & \cmark \\
Snowflake Arctic Embed L v2~\cite{ymy+24} & Single-vector & 1024  & \cmark & \cmark \\
Cohere Embed v4~\cite{cohere_embed4}  & Single-vector & 1536      & \xmark & \xmark \\
OpenAI text-embedding-3-large~\cite{openai_embeddings3} & Single-vector & 3072 & \xmark & \xmark \\
\bottomrule
\end{tabular}
}
\caption{Embedding models evaluated on ANDOR.}
\label{tab:models}
\vspace{-0.5cm}
\end{table}
We evaluate seven recent retrieval models spanning both single-vector and multi-vector retrieval (Table~\ref{tab:models}).
The multi-vector models are GTE ModernColBERT v1 and Jina Embeddings v4 operating in its late-interaction mode. The
single-vector models are Jina Embeddings v4, Qwen3 Embedding 0.6B, Snowflake Arctic Embed L v2, Cohere Embed v4, and
OpenAI text-embedding-3-large. Jina Embeddings v4 jointly produces single-vector and multi-vector representations.
Comparing its two output modes isolates the effect of the representation when the architecture, training data,
and parameter count are identical.

\subsection{Evaluation modes}

\paragraph{Zero-shot:} We first evaluate all the seven models in a zero-shot setting. We rank the single-vector models using cosine similarity and multi-vector models with Chamfer score as defined in Equation~\ref{eq:Chamfer}. Scoring is done against the full corpus of $50{,}000$ documents, and nine suites of $1{,}000$ test queries for various widths (see Section~\ref{sec:dataset} for the definition of width). 

\paragraph{Fine-tuned evaluation:} We fine-tune the five open source models on the $800$ training queries, one run per training width, and evaluate on all the test widths. Each run lasts $6{,}250$ steps at a batch size of $32$ queries; we observe that the models converge in these settings. Every query in a batch is scored against $32$ documents: $2$ positive documents and $30$ negative documents, of which $18$ are hard negatives that satisfy all but one of the queried categories, $6$ satisfy all but two, $3$ satisfy all but three, and $3$ are sampled at random. The two runs trained with the one-positive objective, GTE ModernColBERT and the separately trained Jina multi-vector head, instead score each query against $1$ positive and $31$ negative documents, drawn from a separately constructed pool in which the same $18$, $6$, and $3$ hard-negative tiers are retained and the random tier is increased from $3$ to $4$. 

\begin{figure*}[t]
    \centering
    \begin{subfigure}[t]{\textwidth}
        \centering
        \includegraphics[width=\linewidth]{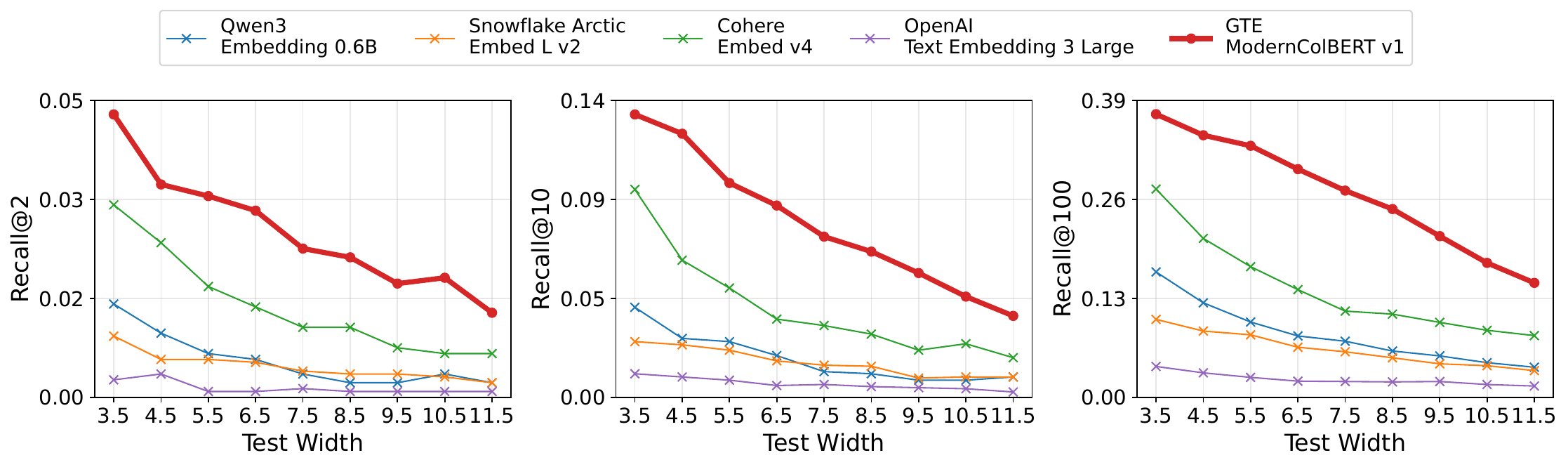}

        \vspace{0.75mm}
    \end{subfigure}
    \begin{subfigure}[t]{\textwidth}
        \centering
        \includegraphics[width=\linewidth]{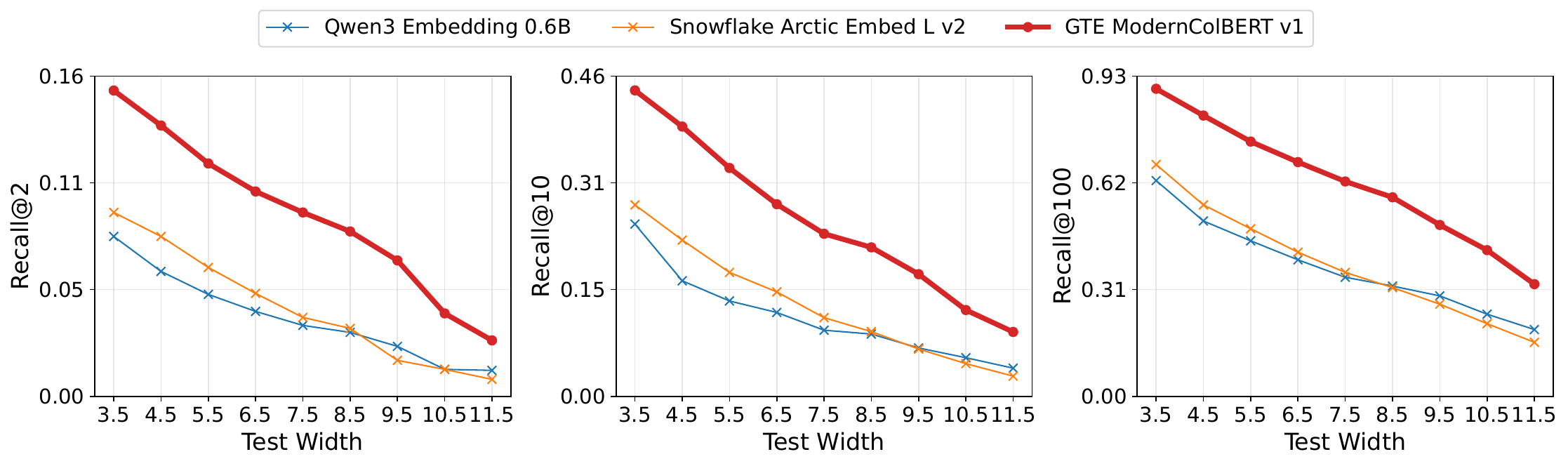}

    \end{subfigure}
    \caption{Retrieval performance across nine test widths at three recall cutoffs, evaluated in the zero-shot setting (top) and after fine-tuning at a training width of $9.5$ (bottom). Multi-vector models consistently outperform single-vector models across all test widths, although the performance of both approaches decreases as the test width increases.}\label{fig:width-sweep}
\end{figure*}

Each run draws from the same family of contrastive loss functions. For a query scored against $D$ documents of which the first $P$ are positive, the loss is

\begin{equation}
  \label{eq:multipos}
  \mathcal{L}
  = \log \sum_{d=1}^{D} e^{s_d/\tau}
    - \frac{1}{P}\sum_{p=1}^{P} \frac{s_p}{\tau},
\end{equation}

where $\tau$ is the temperature and $s_d$ is the similarity score between the query and document $d$. At $P=1$ Equation~\ref{eq:multipos} reduces to softmax cross-entropy against the single positive document, the InfoNCE objective of \citet{ovk18} in the form used for dense retrieval by \citet{koml+20}. At $P=2$ it is the multi-positive extension of \citet{sohn16}, which credits both
positive documents rather than forcing the model to rank one relevant document above another. A detailed exposition on the fine-tuning settings is provided in
Appendix~\ref{sec:app-empirical}.

\subsection{Results}

In this section, we go over the results of our evaluation of the models from Table~\ref{tab:models} on the ANDOR dataset under various settings. We vary zero-shot across various test widths and fine-tuned evaluation across different test and train width to observe the difference in performance between multi-vector and single-vector models. In the main text, we focus our evaluation on recall, and provide additional plots for nDCG and MRR in Appendix~\ref{sec:app-empirical}. In addition, the Appendix includes results for varying training width, as well as results for Jina v4 where the multi and single-vector heads are fine-tuned separately.

\subsubsection{Zero-shot evaluation} We start with pretrained models as a preliminary benchmark. Figure~\ref{fig:width-sweep} considers the gap in recall between GTE ModernColBERT and all the other single-vector models (except Jina). All four single-vector models exhibit substantial gaps with GTE ModernColBERT to varying degrees. Cohere exhibits the strongest performance among the single-vector models at all test widths: GTE ModernColBERT achieves relative gains of about 81\% in Recall@2 and 87\% in Recall@100. All other models are substantially worse, with GTE ModernColBERT achieving consistent multiplicative gains. The margins exhibited by GTE ModernColBERT are noted in Tab~\ref{tab:colbert-relative-gains}. The zero-shot evaluation was also cross checked on known BEIR benchmarks and reproduced the known metrics.
\begin{table}[t]
\centering
\small
\setlength{\tabcolsep}{6pt}
\begin{tabular}{lccc}
\toprule
\multicolumn{4}{c}{\textbf{Zero-shot baselines}} \\
\cmidrule(lr){1-4}
Baseline & Recall@2 & Recall@10 & Recall@100 \\
\midrule
Cohere v4         & 80.6\%   & 93.8\%   & 87.0\%   \\
Qwen3             & 352.2\%  & 346.8\%  & 223.8\%  \\
Snowflake         & 432.3\%  & 396.9\%  & 321.9\%  \\
OAI-3-large       & 1662.1\% & 1204.5\% & 1025.0\% \\
\midrule
Mean gain         & 631.8\%  & 510.5\%  & 414.4\%  \\
\addlinespace
\multicolumn{4}{c}{\textbf{Fine-tuned baselines}} \\
\cmidrule(lr){1-4}
Baseline & Recall@2 & Recall@10 & Recall@100 \\
\midrule
Qwen3      & 98.7\%  & 92.4\% & 55.8\% \\
Snowflake  & 100.1\% & 91.8\% & 59.7\% \\
\midrule
Mean gain  & 99.4\% & 92.1\% & 57.8\% \\
\bottomrule
\end{tabular}
\caption{Relative gains of GTE ModernColBERT over zero-shot and fine-tuned single-vector baselines.}
\label{tab:colbert-relative-gains}

\end{table}
\subsubsection{Fine-tuned evaluation} 
The gap between multi-vector and single-vector models persists despite fine-tuning. At the easiest test width of 3.5, GTE ModernColBERT achieves roughly 89--93\% Recall@100 across training widths. Recall@2 and @10 also show substantial improvements. Despite fine-tuning helping the single-vector models, they do not catch up to GTE ModernColBERT and perform roughly equivalently well at all test widths. We refer the reader to Figure~\ref{fig:width-sweep} for the full range of results, and Tab~\ref{tab:colbert-relative-gains} for relative gains. Averaged over the complete train--test grid and the two single-vector baselines, ColBERT maintains a margin of about 99\% for Recall@2 and 58\% for Recall@100.

\begin{table}[t]
\centering
\small
\setlength{\tabcolsep}{6pt}
\begin{tabular}{lccc}
\toprule
Baseline & Recall@2 & Recall@10 & Recall@100 \\
\midrule
SV fine-tuned     & 104.8\%  & 83.8\%   & 61.5\%  \\
SV zero-shot      & 1613.8\% & 1263.3\% & 767.5\% \\
\bottomrule
\end{tabular}
\caption{Relative gains of jointly fine-tuned Jina Embeddings v4 MV over zero-shot and fine-tuned SV.}
\label{tab:jina-embeddings-relative-gains}
\end{table}

\begin{figure*}[t]
    \centering
    \includegraphics[width=\textwidth]{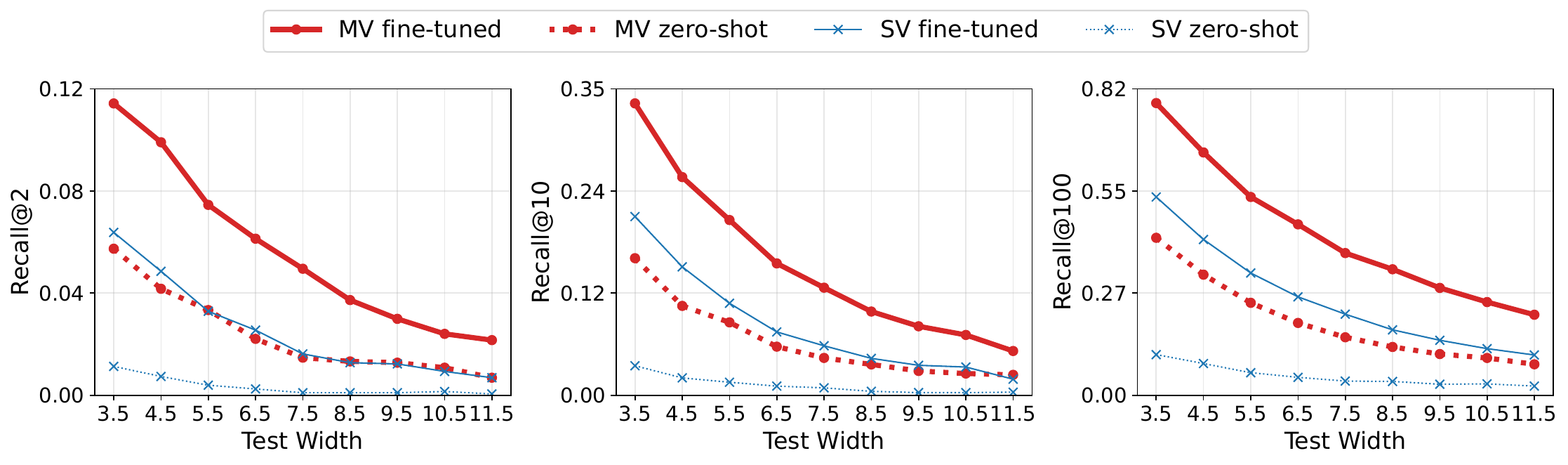}
    \caption{Retrieval performance across nine test widths at three recall cutoffs for Jina Embeddings v4, evaluated after joint fine-tuning at a mean training width of $5.5$.}
    \label{fig:jina-width-sweep}
\end{figure*}
\subsubsection{Simultaneous SV-MV training} The controlled Jina comparison isolates the effect of representation power by updating both the single-vector and multi-vector heads of the same model from the same checkpoint. Figure~\ref{fig:mainfig} and Figure~\ref{fig:jina-width-sweep} show that fine-tuning the single-vector head recovers roughly the ground that zero-shot late interaction already covers, but barely any more.

The fine-tuned single-vector model approaches or modestly exceeds the zero-shot late-interaction baseline rather than clearly surpassing it. The fine-tuned multi-vector model remains substantially ahead throughout, leading the fine-tuned single-vector model at all test widths. Tab~\ref{tab:jina-embeddings-relative-gains} shows relative gains of 105\%, 84\%, and 62\% over the fine-tuned single-vector head at Recall@2, Recall@10, and Recall@100, respectively. While our main text results focus on a training width of 5.5, these trends are replicated across all training widths and additional plots are provided in Appendix~\ref{sec:app-empirical}. This is the strongest evidence we have of a fundamental gap in the representation power of single and multi-vector representations. Among Jina's two output modes, the multi-vector representation also gives stronger zero-shot performance.

\section{Discussion and Limitations}
In this work, we study the retrieval ordering problem and present an explicit family of relevance matrices exhibiting an exponential separation between single-vector and multi-vector representations. Motivated by this result, we introduce ANDOR, a benchmark dataset in the e-commerce domain that captures the same underlying logical structure in a natural semantic retrieval task. We emphasize that ANDOR is not the pattern-matrix instance used in our proof, but rather a semantic instantiation inspired by the theoretical construction. Nevertheless, our experiments show that this semantic benchmark is sufficient to reveal a substantial expressiveness gap between single-vector and multi-vector retrieval models.

Increasing the test width consistently degrades retrieval performance across all methods. One possible explanation is that wider disjunctions require a model to simultaneously represent more alternatives within each category while still enforcing conjunctions across categories. We leave a detailed investigation of this phenomenon to future work.

Finally, we note that each query in ANDOR has only a small number of relevant documents—every test query has exactly two positive documents. Consequently, the construction of \cite{sagks26} guarantees the existence of a $2k+1=5$ dimensional single-vector embedding that exactly realizes the retrieval ordering. However, this is an existential result that assumes access to the complete relevance matrix and does not provide a practical procedure for learning semantic embeddings from queries and documents. In contrast, our experiments operate in the standard retrieval setting, where embeddings must be learned solely from semantic content. Despite the simplicity of the underlying semantic rule and the sparsity of the relevance matrix, existing single-vector models struggle to realize the desired retrieval ordering, whereas multi-vector models are considerably more successful. This suggests that the expressiveness gap identified by our theoretical analysis can also arise in realistic semantic retrieval tasks.

We note down two explicit open problem directions for further investigation.
\begin{asparaitem}
    \item Our proof shows that single vectors fail when we aim to solve the retrieval ordering problem explicitly. In realistic settings, a more relevant question might be to ask a more approximate version of this problem, where Eqn.~\ref{eq:retrievalordering} is satisfied for a $(1-\varepsilon)$ fraction of the entries of each row of the relevance matrix. An interesting future problem would be to check if the exponential gap holds even when our goal is only to solve retrieval order approximately. 
    \item We give an exponential separation between single and multi-vector embeddings when the goal is to rank relevant documents above irrelevant documents. What if the goal is to preserve a specified ranking of documents per query, do the multi-vector embeddings have an exponential advantage in this setting?
\end{asparaitem}

\paragraph{Use of AI Tools} AI tools provided the initial direction and mathematical tools for exploration. The specific constructions and proofs were subsequently entirely done by the authors. Coding agents were further used for coding the dataset and evaluation pipeline. The code was thoroughly reviewed by the authors.

\printbibliography
\appendix
\newpage
\section{ Dataset Construction and Details}\label{sec:app-dataset}

\subsection{Encoding and relevance computation}

As mentioned earlier, each of the 20 categories has 20 attributes, so the set of
values a document can hold in a category is a 20-bit mask. A document is a row of
20 such masks, with zero denoting that the category is absent, and the corpus is
a matrix of shape $50{,}000\times 20$. A query is stored the same way.

Under this encoding the AND-of-OR rule is a single vectorized operation. Let
$Q_c$ denote the query mask and $D_c$ the document mask for category $c$,
\begin{equation}
    \text{rel}(q,d)=1 \iff \forall c\in C_q, \hspace{1cm} Q_c\wedge D_c \ne 0,
\end{equation}
where $\wedge$ is bitwise conjunction. Disjunction within a category is a nonzero
overlap, and conjunction across categories requires that no queried category has
zero overlap. We compute every relevance judgment and collision check with this
expression.

\subsection{Count sampling}

Two kinds of quantity vary in the dataset: how many categories a document or
query holds, and how many attributes each of those categories holds. We control
only two aspects of each: its range and its mean. We therefore draw each count
from the distribution of maximum entropy with the required range and mean. Write
$X$ for the count being sampled, $\{\ell,\dots,h\}$ for the range of integers it
is allowed to take, and $\mu\in(\ell,h)$ for its target mean. This distribution
assigns the probabilities of consecutive values in a constant ratio $r>0$,
\begin{equation}
    \Pr[X=k]\propto r^{k}, \hspace{1cm} k\in\{\ell,\dots,h\},
\end{equation}
so that $r=\Pr[X=k+1]/\Pr[X=k]$ for every $k$ in the range. The mean increases
with $r$, so there is a unique $r$ for which $\mathbb E[X]=\mu$, and we solve for
it numerically.

Whenever $\mu$ is the midpoint of the range, the ratio is $1$ and the
distribution is uniform. The only counts for which it is not are the accept-set
sizes, whose means sit below their midpoints. For test queries, with range
$\{2,\dots,6\}$ and $\mu=3.5$, we get $r=0.77$ and probabilities
$0.31,0.24,0.19,0.14,0.11$ for $2$ through $6$, so a query is nearly three times
as likely to accept two values in a category as six. Most disjunctions are
therefore narrow, which is what keeps the AND-of-OR structure hard.
Table~\ref{tab:app-controls} gives $\ell$, $h$, $\mu$ and $r$ for every count.
 
\begin{table}[h]
\caption{Range and target mean for every sampled count, with the resulting
ratio $r$. The ratio is $1$, and the distribution therefore uniform, wherever
the target mean sits at the midpoint of the range. Accept-set sizes are the sole
exception and are listed at the narrowest width.}
\label{tab:app-controls}
\small
\centering
\begin{tabular}{@{}lccc@{}}
\toprule
Count & Range & $\mu$ & $r$ \\
\midrule
\multicolumn{4}{@{}l}{\emph{Queries}} \\
\quad Categories, test & $\{7,\dots,9\}$ & $8.0$ & $1$ \\
\quad Categories, training & $\{5,\dots,7\}$ & $6.0$ & $1$ \\
\quad Accept-set size, test & $\{2,\dots,6\}$ & $3.5$ & $0.77$ \\
\quad Accept-set size, training & $\{4,\dots,8\}$ & $5.5$ & $0.77$ \\
\addlinespace
\multicolumn{4}{@{}l}{\emph{Documents}} \\
\quad Categories & $\{11,\dots,15\}$ & $13.0$ & $1$ \\
\quad Values per present category & $\{2,\dots,4\}$ & $3.0$ & $1$ \\
\bottomrule
\end{tabular}
\end{table}

\subsection{Query suites and the nested width sweep}
We have nine test suites, each denoted by its query width. The suites share their
queries: a query names the same categories in every suite, and only the number of
values it accepts in each category changes. To keep every test query at exactly
two relevant documents across all nine widths, we widen a query by adding
accepted values and never by replacing them. Concretely, for each query and each
of its categories we fix a random ordering of the twenty values. The narrowest
suite accepts the first few values in that ordering, and each wider suite accepts
one more. The accept sets are therefore nested,
\begin{equation}
    A_{3.5}(c)\subset A_{4.5}(c)\subset\cdots\subset A_{11.5}(c),
\end{equation}
where $A_w(c)$ is the set of values the query accepts in category $c$ at width
$w$.

Nesting lets us state relevance once instead of nine times. A document with an
accepted value at the narrowest width still has one at every wider width, so it
stays relevant throughout; a document with no accepted value even at the widest
width has none at any narrower width, so it stays irrelevant throughout. We use
both directions when building the corpus: each test positive is given a value
from the narrowest accept set, and every other document is checked against the
widest. The same argument fixes hard negatives, since a document built to miss
$k$ categories against the widest accept sets misses those same $k$ at every
width, and we can call it a level-$k$ hard negative without naming a suite.

\subsection{Corpus construction}
\paragraph{Exact category quotas.}
Generating a large pool of candidate documents and keeping only those that
survive collision repair would skew the distribution of the number of categories
per document. We therefore fix that distribution in advance. A quota assigns
exactly $10{,}000$ documents to each category count from $11$ to $15$, is
shuffled once, and then pins the category count of each document as we construct
it. Since the maximum-entropy distribution for this count is uniform, the quota
reproduces it exactly rather than approximating it by sampling.
\paragraph{Test positives.}
For each test query we construct two positive documents. A document satisfies a
query when every queried category holds at least one accepted attribute, so we go
through those categories and place one accepted value in each. We draw that value
from the query's \emph{narrowest} accept set, which by nesting belongs to every
wider accept set as well, so the document stays positive at all nine widths.

We then make the positives look like ordinary documents. We pad each queried
category with further values up to its usual count, and add categories the query
does not name until the document reaches the category count the quota assigned to
it. We also draw the accepted value uniformly rather than always taking the first
value the query lists, since a fixed choice would let a model find positives by
checking a single option instead of evaluating the whole disjunction.
\paragraph{Hard negatives.}
For each test query we manufacture near-misses that satisfy all but $k$ queried
categories, with $k\in\{1,2,3\}$. In the $k$ categories where we want the
document to fail, we give it only values from outside the query's \emph{widest}
accept set, so it fails them at every width. In the remaining categories we give
it an accepted value from the \emph{narrowest} accept set, so it satisfies them at
every width. The document therefore misses the same $k$ categories in all nine
suites, and we can label it level-$k$ once.

A document can also fail a category by not containing that category at all, and
we never do this. Such a negative would carry fewer of the query's categories
than a positive does, so a model could rank documents by counting how many
queried categories appear, and rank correctly without ever looking inside the
disjunction.

Each query with 7 categories receives 24 level-1, 8 level-2 and 4 level-3
documents, and each query with 8 or 9 categories receives 28, 8 and 4. Level 1 is
the hardest case and takes most of the budget, and its share grows with the number
of categories so that each single category is missed by about the same number of
documents.
\paragraph{Distractors.}
We fill the remaining $9{,}220$ positions with random documents, drawn under the
same category-count and value-count controls as positives and hard negatives, so
that a document's role in the construction cannot be read off its shape. Filling
these positions took $10{,}642$ attempts.
\paragraph{Collision repair.}
We check every candidate document against the $1{,}000$ test queries at the
\emph{widest} suite, and retain it only if it satisfies no query other than the
one it was built for. If a document accidentally satisfies another query, we
repair it in place rather than discarding it, since throwing failures away would
bias the corpus toward documents that are easy to construct. We select a
colliding query, and for each of its satisfied categories in random order we
redraw the document's values in that category up to 16 times, preserving the
document's category count and respecting the values the construction requires it
to hold or to avoid. Each redraw can change which \emph{other} queries the
document satisfies, since the values it removes may be the ones that were
matching them, so we score every attempt by the total number of queries the
document still collides with and keep the best one. We then recompute the
collisions and repeat with whatever query remains, for at most 100 rounds. A
document that still collides after that is discarded. Checking at the widest
suite is enough for all nine, because a document that satisfies no query when the
accept sets are largest satisfies none when they are smaller.
\paragraph{Permutation.}
We then randomly permute all $50{,}000$ documents and remap every positive and
hard-negative index, so construction order carries no retrieval signal. Document
identifiers are assigned after the permutation.
\paragraph{Training suites.}
We generate the training queries last, against the frozen corpus, and compute
their relevance from the AND-of-OR rule. We retain all 800 queries, and neither
filter nor resample any of them to hit a target relevance density. The growth in
the number of positives with width in Table~\ref{tab:app-train} is therefore an
effect of widening the accept sets, and not of selection.

\subsection{Retrieval and relevance labels}
For each relevant document, the number of matched values beyond the first
required match is summed across queried categories and normalized by the maximum
count for that query. The grading score is 0.7 times this normalized
extra-overlap score plus 0.3 times a deterministic document-quality score. Up to
the 20 highest-scoring relevant documents receive grade 2, and remaining relevant
documents receive grade 1. Every test query has only two relevant documents, so
both test positives receive grade 2. Dense training queries can have more than 20
positives and therefore contain both grades.
The document-quality score is deterministic rather than random: it is the first 8
bytes of the SHA-256 digest of the string \texttt{seed:doc\_id}, scaled to
$[0,1)$. It is reproducible from the identifier alone, is independent of the
document's attributes, and only breaks ties among documents with equal overlap.

\subsection{Text rendering}
A document is stored as a set of bitmasks corresponding to the category attributes they satisfy. These are not valid inputs to the
models, so we also write each document in a human readable form, as
shown in Figure~\ref{fig:worked-scoring-example}. The text is generated from a
small set of templates and always mentions every attribute the document holds,
without indicating which categories are absent, so nothing about a document's
role can be read off its phrasing or its length. Within a category we join the
accepted attributes with ``or'' and join the categories with ``and'', which is
the AND-of-OR rule read aloud. Queries are addressed to a shopper whose name is
drawn from a fixed pool of first and last names, following the LIMIT
dataset~\cite{wbnl26}, and the same query keeps the same name across all nine
widths.

\subsection{Dataset statistics}

Table~\ref{tab:app-corpus} reports the realized corpus statistics,
Table~\ref{tab:app-hn} the corpus composition, Table~\ref{tab:app-test} the test
suites, and Table~\ref{tab:app-train} the training suites.

\begin{table}[t]
\centering
\small
\begin{tabular}{@{}lr@{}}
\toprule
Statistic & Value \\
\midrule
Documents & 50{,}000 \\
Categories per document & 11--15 \\
Mean categories per document & 13.000 \\
Values per present category & 2--4 \\
Mean values per present category & 2.978 \\
Total values per document & 24--55 \\
Mean total values per document & 38.710 \\
\bottomrule
\end{tabular}
\caption{Realized corpus statistics. The category-count distribution is exactly
balanced by construction, with 10{,}000 documents at each count from 11 to 15.}
\label{tab:app-corpus}

\end{table}

\begin{table}[t]
\centering
\small
\begin{tabular}{@{}lr@{}}
\toprule
Construction role & Documents \\
\midrule
Designated test positives & 2{,}000 \\
Injected hard negatives & 38{,}780 \\
\quad level 1 (one category missed) & 26{,}780 \\
\quad level 2 (two categories missed) & 8{,}000 \\
\quad level 3 (three categories missed) & 4{,}000 \\
Random distractors & 9{,}220 \\
\midrule
Total & 50{,}000 \\
\bottomrule
\end{tabular}
\caption{Corpus composition by construction role. A level-$k$ hard negative
violates exactly $k$ queried categories at every width. Roles describe
construction only; serialized records carry no role field, since relevance is
query-dependent.}\label{tab:app-hn}

\end{table}

\begin{table}[t]
\centering
\small
\begin{tabular}{@{}lrrrrr@{}}
\toprule
Suite & \makecell{Accepted values\\per category} & Realized mean & \makecell{Total accepted\\per query} & Mean total & Positives \\
\midrule
3.5  & 2--6   & 3.506  & 16--41  & 28.166 & 2 \\
4.5  & 3--7   & 4.506  & 23--50  & 36.199 & 2 \\
5.5  & 4--8   & 5.506  & 30--59  & 44.232 & 2 \\
6.5  & 5--9   & 6.506  & 37--68  & 52.265 & 2 \\
7.5  & 6--10  & 7.506  & 44--77  & 60.298 & 2 \\
8.5  & 7--11  & 8.506  & 51--86  & 68.331 & 2 \\
9.5  & 8--12  & 9.506  & 58--95  & 76.364 & 2 \\
10.5 & 9--13  & 10.506 & 65--104 & 84.397 & 2 \\
11.5 & 10--14 & 11.506 & 72--113 & 92.430 & 2 \\
\bottomrule
\end{tabular}
\caption{Test suites. All nine share one corpus, the same 1{,}000 query
skeletons, and the same category counts: 305 queries with 7 categories, 357 with
8, and 338 with 9, for a mean of 8.033. Every test query has exactly two relevant
documents at every width, so difficulty is not confounded by a change in
relevance cardinality.}
\label{tab:app-test}

\end{table}

\begin{table}[t]
\centering
\small
\begin{tabular}{@{}lrrrrrr@{}}
\toprule
Suite & \makecell{Accepted values\\per category} & Realized mean & \makecell{Total accepted\\per query} & Mean total & Positives & Mean positives \\
\midrule
5.5 & 4--8  & 5.491 & 20--49 & 32.945 & 11--1{,}114  & 208.104     \\
6.5 & 5--9  & 6.491 & 25--56 & 38.945 & 36--1{,}571  & 378.955     \\
7.5 & 6--10 & 7.491 & 30--63 & 44.945 & 95--2{,}028  & 605.574     \\
8.5 & 7--11 & 8.491 & 35--70 & 50.945 & 180--2{,}584 & 880.398     \\
9.5 & 8--12 & 9.491 & 40--77 & 56.945 & 294--3{,}076 & 1{,}185.989 \\
\bottomrule
\end{tabular}
\caption{Training suites used for fine-tuning. All share the same 800 query
skeletons, with exactly 267 queries at 5 categories, 266 at 6, and 267 at 7.
Relevance is computed exactly against the frozen corpus and no skeleton is
filtered by density, so the growth in positives is the effect of widening the
accept sets.}\label{tab:app-train}

\end{table}

\subsection{Category and attribute vocabulary}

Table~\ref{tab:app-vocab} lists the fixed vocabulary of 20 categories with 20
attribute values each. For readability, the table renders the released-file
identifiers in title case with underscores replaced by spaces.
\begin{table}[p]
\rotatebox{90}{%
\begin{minipage}{0.9\textheight}
\centering
\scriptsize
\begin{tabular}{lp{20cm}}
\toprule
Category & Attribute values \\
\midrule
Brand & Aria Wear, Loom Lab, Nova Stitch, Urban Thread, Casa Cloth, Vivid Line, Metro Loft, Cotton Cove, Highland Mode, Echo Apparel, Amber Lane, Silver Loom, Blue Ridge, Prism Attire, Cedar Fashion, Moss and Mill, Velvet Arc, Northbay Style, Riverloom, Solara Studio \\
Color & Black, White, Navy, Charcoal, Cream, Beige, Brown, Red, Burgundy, Pink, Lavender, Purple, Blue, Teal, Green, Olive, Yellow, Orange, Silver, Gold \\
\makecell[l]{Garment\\ Type} & T Shirt, Shirt, Blouse, Sweater, Cardigan, Hoodie, Jacket, Coat, Dress, Skirt, Trousers, Jeans, Shorts, Blazer, Vest, Jumpsuit, Kurta, Tunic, Leggings, Scarf \\
Material & Cotton, Linen, Wool, Silk, Denim, Polyester, Nylon, Rayon, Viscose, Leather, Suede, Fleece, Cashmere, Bamboo, Modal, Spandex, Canvas, Corduroy, Satin, Chiffon \\
Texture & Smooth, Ribbed, Waffle, Brushed, Quilted, Pleated, Sheer, Structured, Crinkled, Soft, Rugged, Lightweight, Heavyweight, Stretch, Breathable, Insulated, Matte, Glossy, Embroidered, Beaded \\
Fit & Slim, Regular, Relaxed, Oversized, Tailored, Cropped, Longline, High Waist, Low Rise, Straight, Tapered, Bootcut, Flared, Boxy, Athletic, Maternity, Petite, Plus, Loose, Compression \\
Size Range & XS, Small, Medium, Large, XL, XXL, Petite Size, Tall Size, Plus Size, One Size, Kids, Toddler, Junior, Maternity Size, Unisex Size, EU 36, EU 38, EU 40, EU 42, EU 44 \\
\makecell[l]{Sleeve\\ Length} & Sleeveless, Short Sleeve, Cap Sleeve, Elbow Sleeve, Three Quarter Sleeve, Long Sleeve, Raglan Sleeve, Balloon Sleeve, Puff Sleeve, Bishop Sleeve, Dolman Sleeve, Cuffed Sleeve, Rolled Sleeve, Adjustable Sleeve, Detachable Sleeve, Strapless, Spaghetti Strap, Halter, One Shoulder, Kimono Sleeve \\
\makecell[l]{Neckline\\ Collar} & Crew Neck, V Neck, Scoop Neck, Boat Neck, Turtleneck, Mock Neck, Collared, Mandarin Collar, Peter Pan Collar, Shawl Collar, Notch Lapel, Hooded, Square Neck, Sweetheart Neck, Halter Neck, Wrap Neck, Off Shoulder, Henley Neck, Zip Neck, Polo Collar \\
Closure & Pullover, Button Front, Zip Front, Snap Button, Hook Eye, Tie Front, Wrap Tie, Drawstring, Elastic Waist, Buckle, Velcro, Toggle, Magnetic Closure, Hidden Zip, Side Zip, Back Zip, Lace Up, Clasp, No Closure, Double Breasted \\
Pattern & Solid, Striped, Checked, Plaid, Floral, Geometric, Polka Dot, Animal Print, Paisley, Abstract, Colorblock, Tie Dye, Ombre, Herringbone, Houndstooth, Camo, Graphic, Embroidered Pattern, Jacquard, Lace Pattern \\
Occasion & Casual, Office, Party, Wedding, Travel, Workout, Lounge, Beach, Festival, Formal, Semi Formal, Outdoor, School, Date Night, Interview, Ceremony, Hiking, Commute, Dinner, Resort \\
Season & Spring, Summer, Autumn, Winter, Monsoon, Year Round, Holiday, Festive, Transitional, Cold Weather, Hot Weather, Humid Weather, Dry Weather, Windy Weather, Evening, Daytime, Vacation, Back to School, Festive Winter, Rainy Day \\
Weather & Breathable Hot, Insulated Cold, Water Resistant, Waterproof, Windproof, Quick Dry, Moisture Wicking, Thermal, UV Protective, Lightweight Layer, Heavy Layer, Rain Ready, Snow Ready, Humid Friendly, Wrinkle Resistant, Stain Resistant, Odor Resistant, Packable, Sun Safe, Cozy \\
Price Tier & Budget, Value, Midrange, Premium, Luxury, Outlet, Sale, Under 25, Under 50, Under 100, Investment, Affordable Luxury, Designer, Student Friendly, Workwear Basic, Capsule Wardrobe, Limited Edition, Handcrafted, Imported, Local Maker \\
Care & Machine Wash, Hand Wash, Dry Clean, Tumble Dry Low, Line Dry, Wrinkle Free, Iron Low, No Iron, Delicate Cycle, Colorfast, Pre Shrunk, Wash Cold, Spot Clean, Bleach Free, Easy Care, Quick Clean, Steam Only, Wash Inside Out, Dry Flat, Low Maintenance \\
Sustainability & Organic, Recycled, Fair Trade, Vegan, Cruelty Free, Low Water, Natural Dye, Biodegradable, Upcycled, Carbon Neutral, Locally Made, Repairable, Durable, Plastic Free, Responsibly Sourced, Slow Fashion, Ethically Made, Renewable Fiber, Secondhand, Zero Waste \\
Style & Minimalist, Classic, Bohemian, Streetwear, Sporty, Preppy, Vintage, Modern, Elegant, Edgy, Romantic, Professional, Chic, Rustic, Glam, Utilitarian, Artsy, Casual Style, Formal Style, Resort Style \\
Target User & Women, Men, Unisex, Teens, Kids, Toddlers, Maternity User, Petite User, Tall User, Plus User, Athletes, Professionals, Travelers, Students, Seniors, Dancers, Hikers, Office Workers, Party Goers, Minimalists \\
\makecell[l]{Accessory \\Compatibility }& Belt Friendly, Scarf Pairing, Blazer Layering, Sneaker Match, Boot Match, Heel Match, Handbag Match, Cap Match, Jewelry Pairing, Watch Pairing, Backpack Friendly, Travel Bag Friendly, Gym Bag Friendly, Raincoat Layer, Cardigan Layer, Coat Layer, Leggings Pairing, Denim Pairing, Formal Shoe Match, Sandal Match \\
\bottomrule
\end{tabular}
\caption{The complete ANDOR vocabulary: 20 categories, 20 attribute values each.}
\label{tab:app-vocab}
\end{minipage}
}
\end{table}

\section{Additional Experimental Results}\label{sec:app-empirical}

This appendix reports the variation of train and test width across the models, together with
nDCG and MRR views of the same runs. Every fine-tuned result uses the final
checkpoint at step $6250$ and is evaluated on the same nine test suites, whose
mean widths range from $3.5$ to $11.5$. Five training widths from $5.5$ to $9.5$ runs are done, making a grid over all $45$ training-test combinations for each model. 

\subsection{Model Settings}

Figure~\ref{fig:app-finetuned-widths-recall} complements
Figure~\ref{fig:width-sweep} with the four Recall training widths not shown
there. The nDCG and MRR figures report all five training widths.

\begin{figure*}[p]
    \centering
    \includegraphics[width=\textwidth,height=0.82\textheight,keepaspectratio]{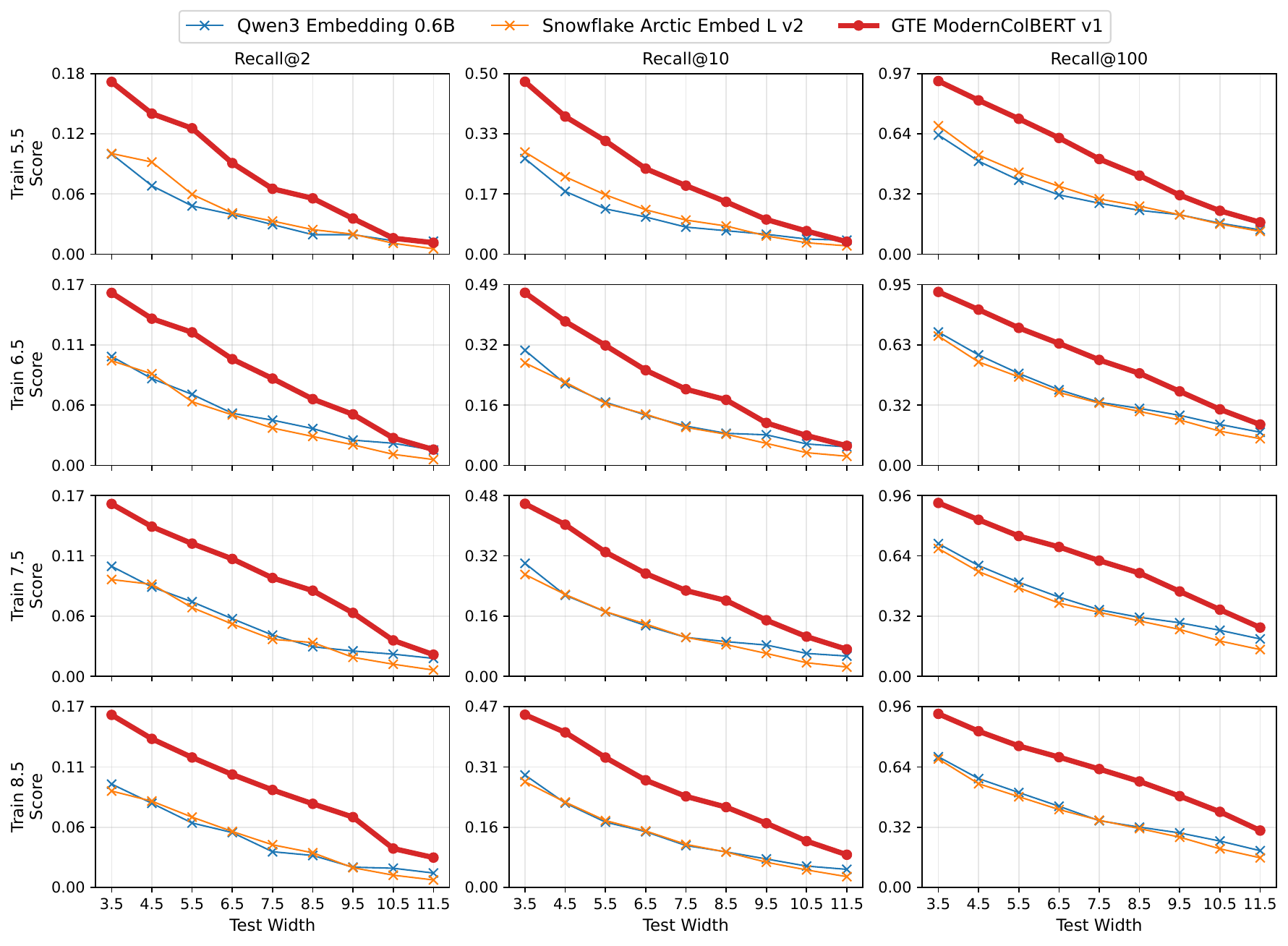}
    \caption{Recall across test mean widths for fine-tuned GTE ModernColBERT v1, Qwen3 Embedding 0.6B, and Snowflake Arctic Embed L v2 at training mean widths $5.5$, $6.5$, $7.5$, and $8.5$. The corresponding results at training width $9.5$ appear in Figure~\ref{fig:width-sweep}.}
    \label{fig:app-finetuned-widths-recall}
\end{figure*}

\subsection{Training paradigms}\label{sec:app-paradigms}

Section~\ref{sec:experiments} states the hyper-parameters shared by every fine-tuning
run: the $800$ training queries, the $6{,}250$ optimizer steps, the effective
batch of $32$ queries, the $32$ candidates scored per query, and the composition
of the hard-negative pool. This section records what distinguishes the runs from
one another. We use five paradigms, listed in Table~\ref{tab:app-paradigms},
whose objectives and optimization settings appear in
Table~\ref{tab:app-objective} and whose systems settings appear in
Table~\ref{tab:app-systems}. Every run uses AdamW with a linear schedule and
$313$ warmup steps, query and document truncation lengths of $320$ and $512$
tokens, an effective batch of $32$ queries. All
reported numbers come from the final checkpoint at step $6{,}250$.

\begin{table}[t]
\centering
\small
\setlength{\tabcolsep}{4pt}
\begin{tabular}{@{}llc@{}}
\toprule
Paradigm & Model & Scoring \\
\midrule
P1 Single-vector & Qwen3 Embedding 0.6B  & Cosine \\
P2 Single-vector & Snowflake Arctic L v2 & Cosine \\
P3 Late interaction & GTE ModernColBERT v1 & MaxSim \\
P4 Joint SV--MV  & Jina Embeddings v4    & Both \\
P5 Separate SV, MV & Jina Embeddings v4  & Both \\
\bottomrule
\end{tabular}
\caption{The five fine-tuning paradigms. Each is run once per training width,
giving five runs per paradigm. P5 trains two independent models per width.}
\label{tab:app-paradigms}
\end{table}

\begin{table}[t]
\centering
\small
\setlength{\tabcolsep}{4pt}
\begin{tabular}{@{}lccccc@{}}
\toprule
Run & $P$ & $\tau$ & LR & WD & Clip \\
\midrule
P1 Qwen3      & $2$ & $0.05$  & $3$  & $0.01$ & $64$ \\
P2 Snowflake  & $2$ & $0.05$  & $3$  & $0.01$ & $64$ \\
P3 ColBERT    & $1$ & $1/320$ & $10$ & $0$    & $1$  \\
P4 Jina joint & $2$ & $0.05$  & $3$  & $0.01$ & --- \\
P5 Jina SV    & $2$ & $0.05$  & $3$  & $0.01$ & --- \\
P5 Jina MV    & $1$ & $1/320$ & $10$ & $0.01$ & --- \\
\bottomrule
\end{tabular}
\caption{Objective and optimization settings. $P$ is the number of supervised
positives in Equation~\ref{eq:multipos}; all runs score $D=32$ candidates.
Learning rates are given in units of $10^{-7}$. The temperature $1/320$ is the
reciprocal of the query-token budget and reproduces unnormalized summed MaxSim
logits.}
\label{tab:app-objective}
\end{table}

\begin{table}[t]
\centering
\small
\setlength{\tabcolsep}{4pt}
\begin{tabular}{@{}lcccc@{}}
\toprule
Run & Precision & Batch & Accum. & Ckpt.\ grad. \\
\midrule
P1 Qwen3      & bf16 & $8$ & $4$    & Yes \\
P2 Snowflake  & bf16 & $8$ & $4$    & Yes \\
P3 ColBERT    & fp16 & $1$ & $32$   & No  \\
P4 Jina joint & bf16 & $1$ & $32/G$ & No  \\
P5 Jina SV    & bf16 & $1$ & $32/G$ & No  \\
P5 Jina MV    & bf16 & $1$ & $32/G$ & No  \\
\bottomrule
\end{tabular}
\caption{Systems settings. The global effective batch size, computed as the
number of devices times the per-device batch times the accumulation count, is
$32$ in every run; $G$ is the number of visible devices,
which is one, two, or four for the Jina runs.}
\label{tab:app-systems}
\end{table}

\paragraph{P1 and P2: single-vector baselines.}
Qwen3 Embedding 0.6B and Snowflake Arctic Embed L v2 are fine-tuned with the
multi-positive objective of Equation~\ref{eq:multipos} at $P=2$, so both
designated positive documents of a training query contribute to the numerator. Queries
and documents are embedded independently and scored by cosine similarity. Both models expect an asymmetric encoding convention, so queries carry the
instruction prefix each model was pretrained with, Qwen3 receiving a task
instruction and Snowflake the string \texttt{query:}, while documents are
encoded unprefixed. 

\paragraph{P3: late interaction.}
GTE ModernColBERT v1 retains one vector per token and is scored with the MaxSim
sum of Equation~\ref{eq:Chamfer}. Because that sum grows with query length, we
divide it by the number of query tokens before the softmax, which makes the
logit scale independent of query length, and we set the temperature to the
reciprocal of the $320$-token query budget so that the resulting logits match
the unnormalized summed scores the released recipe assumes. This run uses the
one-positive objective, $P=1$. 

\paragraph{P4: joint single- and multi-vector training.}
Jina Embeddings v4 exposes a single-vector head and a multi-vector head over one
shared backbone, so both can be updated from the same checkpoint, on the same
batches, under one optimizer. The objective is the unweighted sum of three terms
computed on the same $32$ candidates: a Matryoshka single-vector loss that
averages Equation~\ref{eq:multipos} over the truncation dimensions $128$, $256$,
$512$, $1024$, and $2048$; the same loss applied to length-normalized MaxSim
scores from the multi-vector head; and a Kullback--Leibler term distilling the
full $2048$-dimensional single-vector score distribution into the
late-interaction distribution. Both distributions are computed at the same
temperature and neither side is detached, so this term couples the two heads
symmetrically rather than performing a one-directional teacher--student update.
The heads are then evaluated separately from the shared checkpoint. This is the
tightest control we have, because the representations differ only in how the
backbone output is consumed.

\paragraph{P5: separately trained single- and multi-vector models.}
 P5 trains the single-vector and multi-vector representations as independent runs, each with the settings appropriate to its scoring function, as recorded in
Table~\ref{tab:app-objective}.

\subsection{Qwen, Snowflake and GTE-Modern Colbert Results}

\paragraph{Recall.}
Recall decreases monotonically with test width throughout the complete five-width grid. Averaged over training widths, GTE ModernColBERT falls from
$91.4\%$ to $25.4\%$ Recall@100 between test widths $3.5$ and $11.5$, a loss of
$66.0$ pp and a factor of $3.6$. The decline is steeper at shallow cutoffs:
Recall@10 and Recall@2 decrease by factors of $6.8$ and $7.9$, respectively.
Qwen3 falls from $67.2\%$ to $17.8\%$ Recall@100, while Snowflake falls from
$68.0\%$ to $14.4\%$. The two single-vector models are close on the narrow and
middle suites, but Qwen3 retains a $3.4$ pp advantage over Snowflake at test
width $11.5$.

GTE ModernColBERT leads the stronger single-vector model at every test width.
The size of that lead depends on how it is measured. In absolute terms the gap
is widest on the easier widths, reaching $27.3$ pp of Recall@100 at test width
$4.5$, and it narrows to $7.6$ pp at width $11.5$. Measured as a ratio, the lead instead grows
steadily as the task hardens, from $1.34\times$ at width $3.5$ to a maximum of
$1.75\times$ at width $7.5$, because the single-vector models degrade faster
than late interaction; it falls back to $1.43\times$ only at the widest suite.

\begin{figure*}[p]
    \centering
    \includegraphics[width=\textwidth,height=0.82\textheight,keepaspectratio]{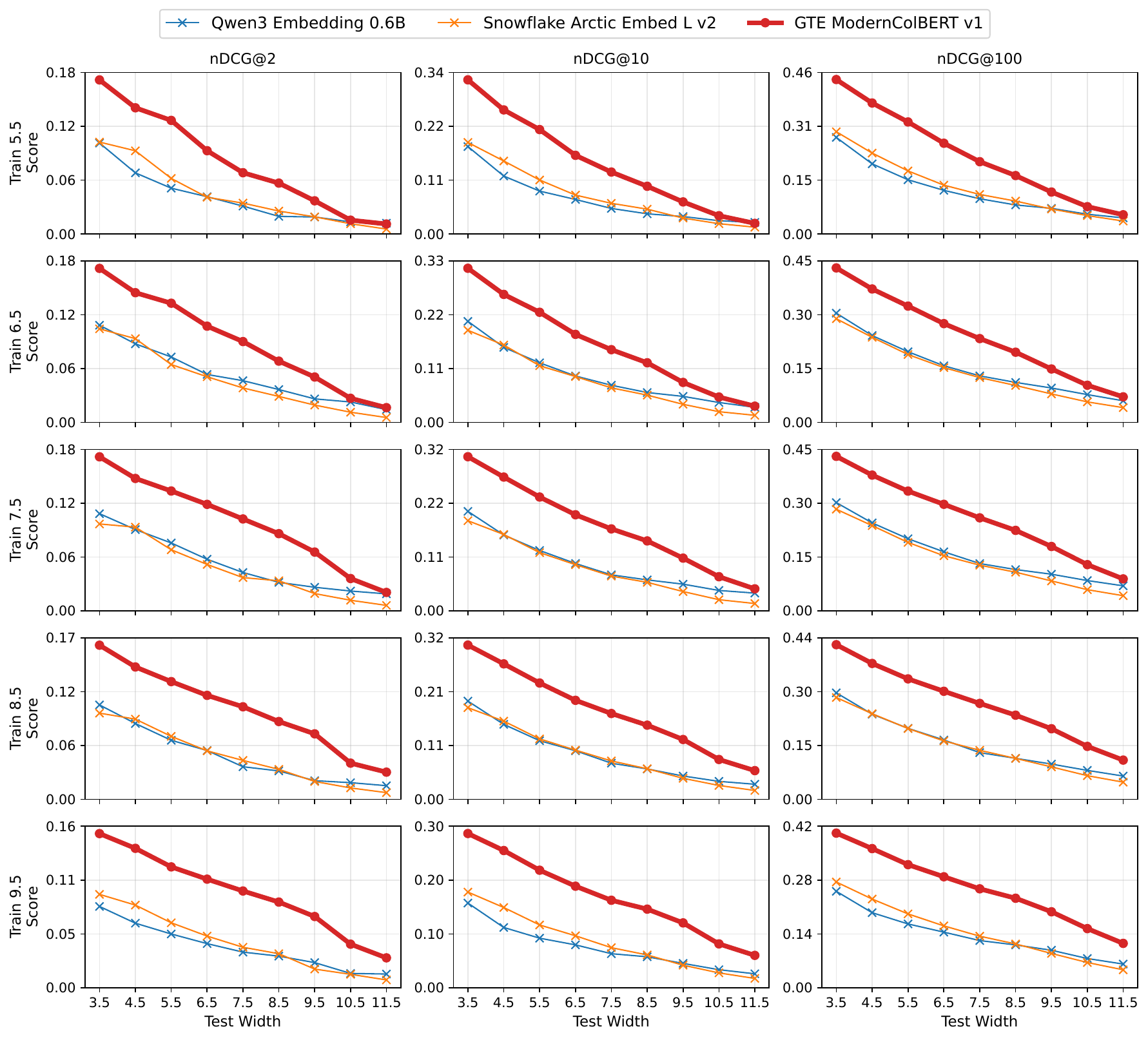}
    \caption{nDCG@2, nDCG@10, and nDCG@100 across test mean widths for fine-tuned GTE ModernColBERT v1, Qwen3 Embedding 0.6B, and Snowflake Arctic Embed L v2 at all five training mean widths from $5.5$ to $9.5$.}
    \label{fig:app-finetuned-widths-ndcg}
\end{figure*}

\paragraph{nDCG.}
Figure~\ref{fig:app-finetuned-widths-ndcg} shows the same degradation after
discounting relevant documents by rank. Since each test query has exactly two
relevant documents, nDCG@2 is numerically close to Recall@2 and stays within
$8\%$ of it across the grid. The deeper nDCG@100 view is more informative. From
test width $3.5$ to $11.5$, it decreases by factors of $4.9$, $4.7$, and $6.6$
for ColBERT, Qwen3, and Snowflake, respectively. Thus, relevant documents found
at deep cutoffs on wide suites increasingly occur at poor ranks.  

\begin{figure*}[p]
    \centering
    \includegraphics[width=\textwidth,height=0.82\textheight,keepaspectratio]{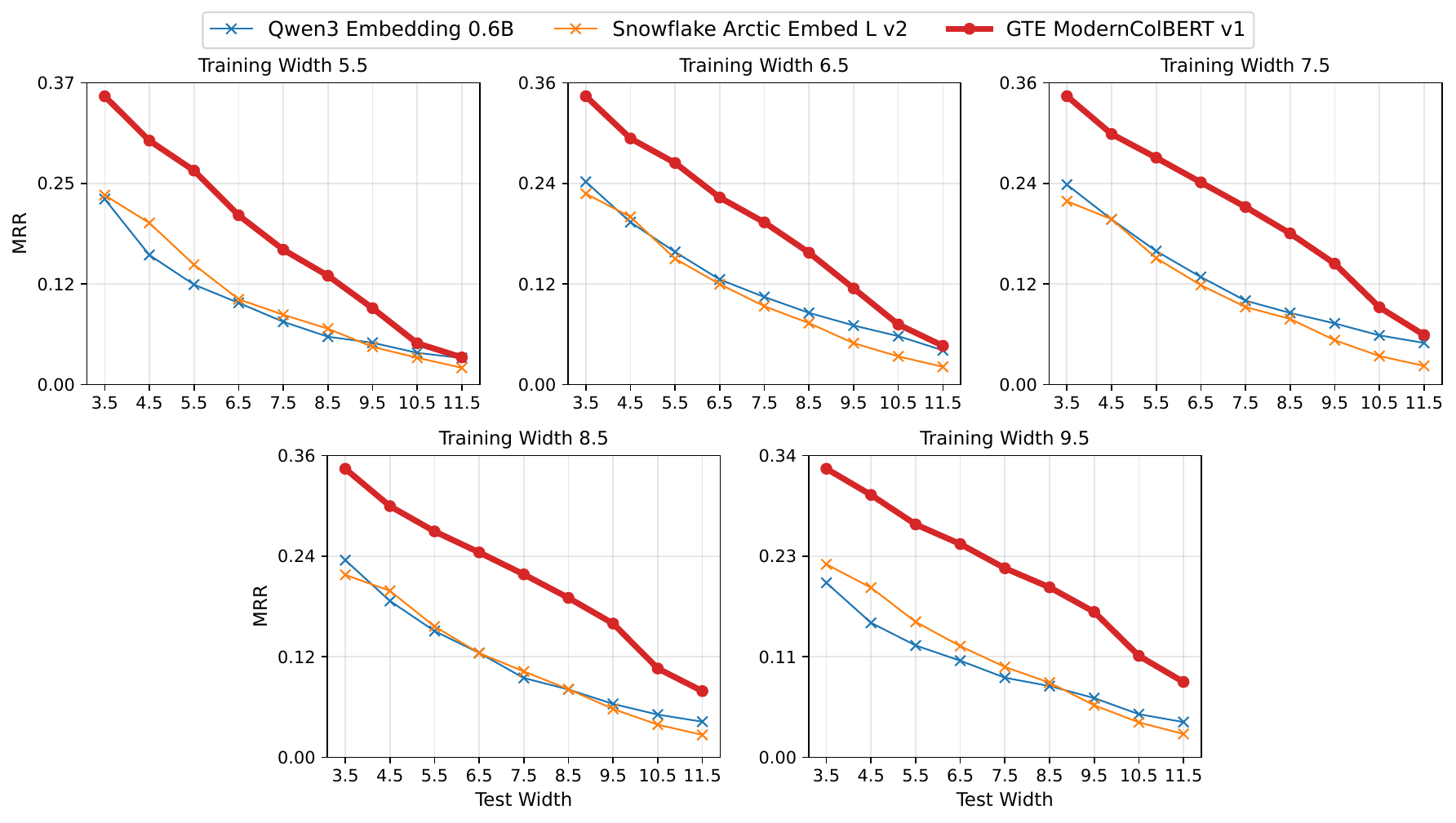}
    \caption{MRR across test mean widths for fine-tuned GTE ModernColBERT v1, Qwen3 Embedding 0.6B, and Snowflake Arctic Embed L v2 at all five training mean widths from $5.5$ to $9.5$.}
    \label{fig:app-finetuned-widths-mrr}
\end{figure*}

\paragraph{MRR.}
MRR is the strictest of the three views. It is also the view in which the two families are
separated most cleanly: pooled over the train--test grid, late interaction leads
the single-vector models by roughly $80\%$ relatively. All three models decay
monotonically as the suites widen, and Snowflake decays fastest, losing an order
of magnitude between the narrowest and widest suites while the other two lose about a factor of six.

\subsection{Variation across training widths}

\begin{figure*}[p]
    \centering
    \includegraphics[width=\textwidth,height=0.82\textheight,keepaspectratio]{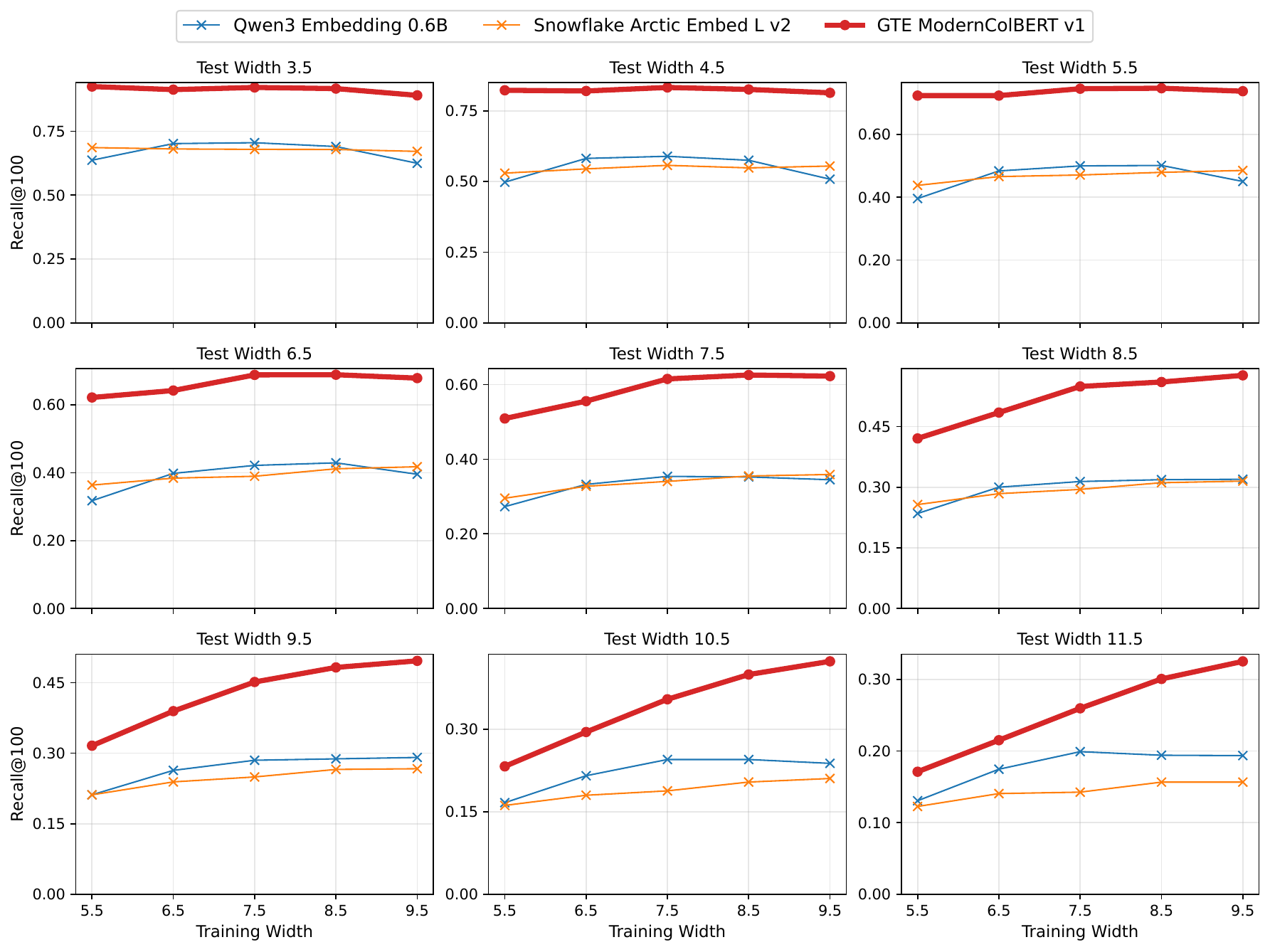}
    \caption{Recall@100 as a function of training mean width for fine-tuned GTE ModernColBERT v1, Qwen3 Embedding 0.6B, and Snowflake Arctic Embed L v2. The nine panels cover every test mean width from $3.5$ to $11.5$.}
    \label{fig:app-training-width-sweep}
\end{figure*}

Figure~\ref{fig:app-training-width-sweep} shows the  sweep of training widths holding the test width fixed and varying the training width. The effect is small on smaller test widths but substantial on wide ones. For ColBERT, the five
training widths span only $3.45$ pp at test width $3.5$, from $92.55\%$ when
trained at width $5.5$ to $89.10\%$ when trained at $9.5$. At test width
$11.5$, however, Recall@100 rises monotonically from $17.10\%$ to $32.50\%$, a
$15.40$ pp or $90\%$ relative improvement. Training on wider queries therefore
substantially improves robustness to wide test suites at little cost on narrow
ones.

ColBERT and Snowflake improve monotonically with training width, peaking at
$9.5$, whereas Qwen3 has an optimum at $7.5$ and is the most sensitive
of the three, so matching training width to test width is not generally optimal.
The ordering is nevertheless unchanged in every panel: ColBERT stays above both
single-vector models at every training and test width.

\subsection{Jointly and Separately trained Jina representations}
Figures~\ref{fig:app-jina-joint-widths-recall}--\ref{fig:app-jina-joint-widths-mrr}
report Jina Embeddings v4 with both representations trained jointly, and
Figures~\ref{fig:app-jina-separate-widths-recall}--\ref{fig:app-jina-separate-widths-mrr}
repeat the comparison with the two trained independently. Recall for the joint
setting at training width $5.5$ appears in Figure~\ref{fig:jina-width-sweep}.

\begin{figure*}[p]
    \centering
    \includegraphics[width=\textwidth,height=0.82\textheight,keepaspectratio]{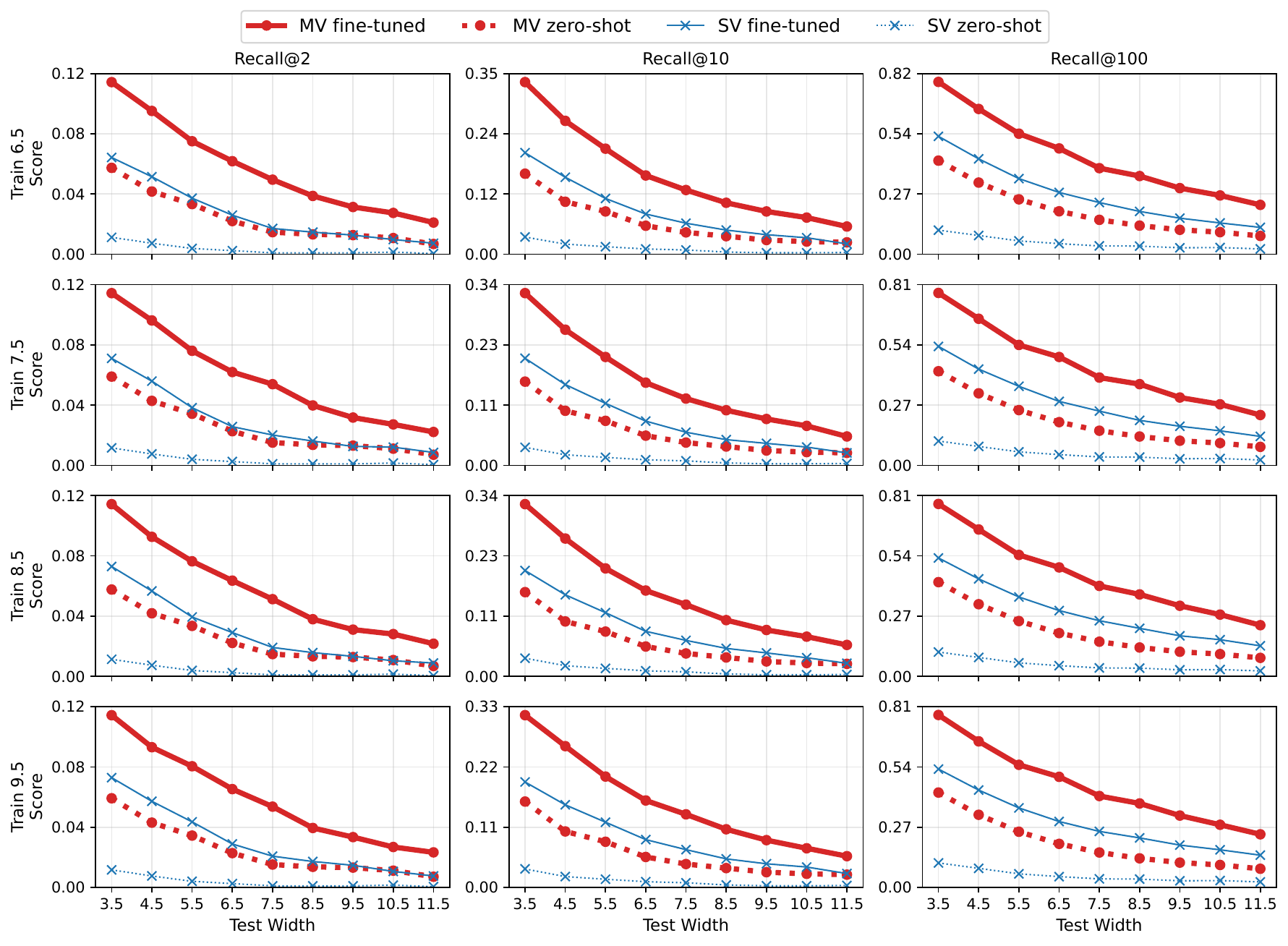}
    \caption{Recall across test mean widths for the jointly trained Jina Embeddings v4 single-vector and multi-vector heads at training mean widths $6.5$, $7.5$, $8.5$, and $9.5$. Solid lines are fine-tuned results and dotted lines are the corresponding zero-shot baselines. Results at training width $5.5$ appear in Figure~\ref{fig:jina-width-sweep}.}
    \label{fig:app-jina-joint-widths-recall}
\end{figure*}

\begin{figure*}[p]
    \centering
    \includegraphics[width=\textwidth,height=0.82\textheight,keepaspectratio]{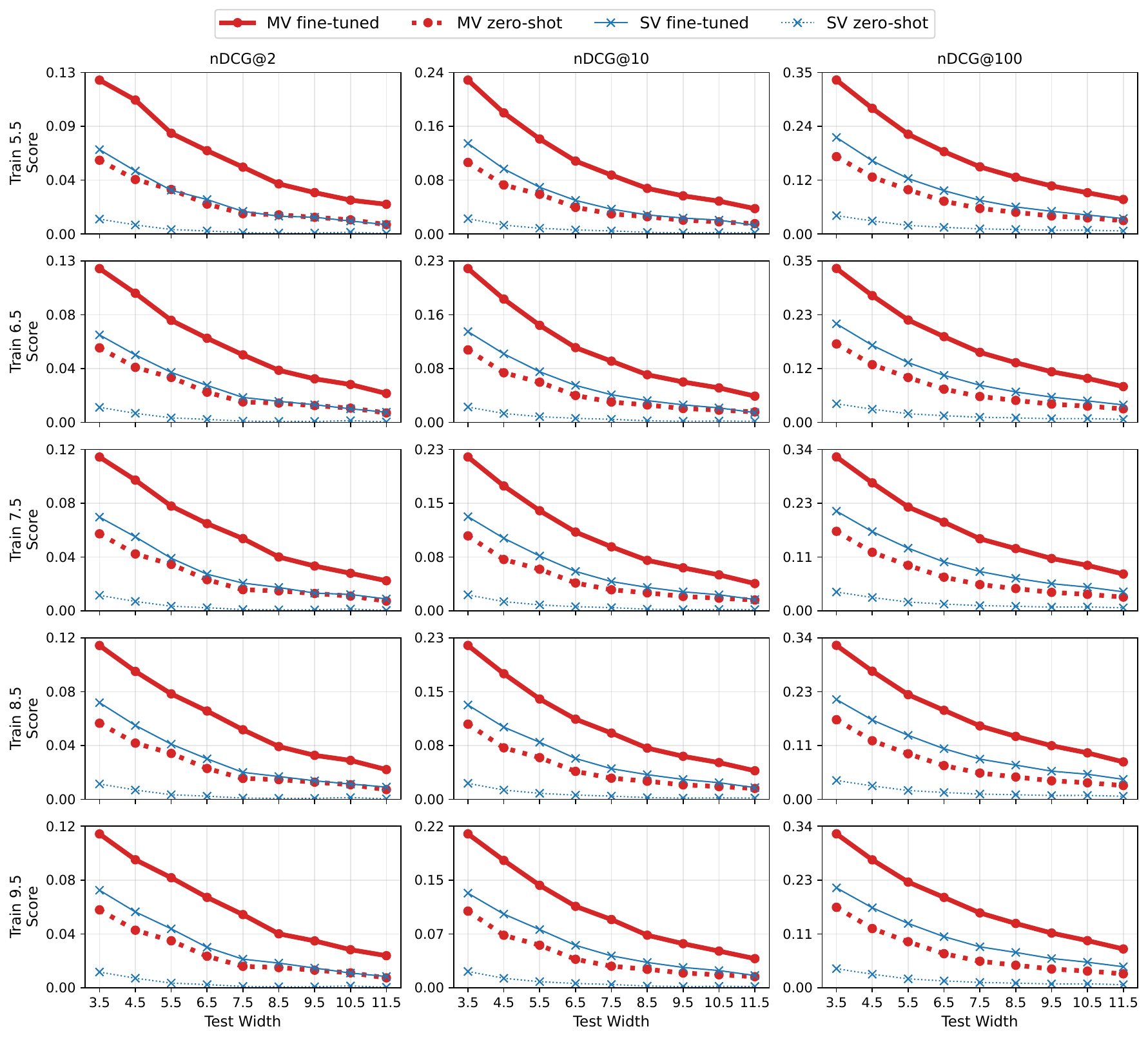}
    \caption{nDCG@2, nDCG@10, and nDCG@100 across test mean widths for the jointly trained Jina Embeddings v4 single-vector and multi-vector heads at all five training mean widths from $5.5$ to $9.5$. Solid lines are fine-tuned results and dotted lines are the corresponding zero-shot baselines.}
    \label{fig:app-jina-joint-widths-ndcg}
\end{figure*}

\begin{figure*}[p]
    \centering
    \includegraphics[width=\textwidth,height=0.82\textheight,keepaspectratio]{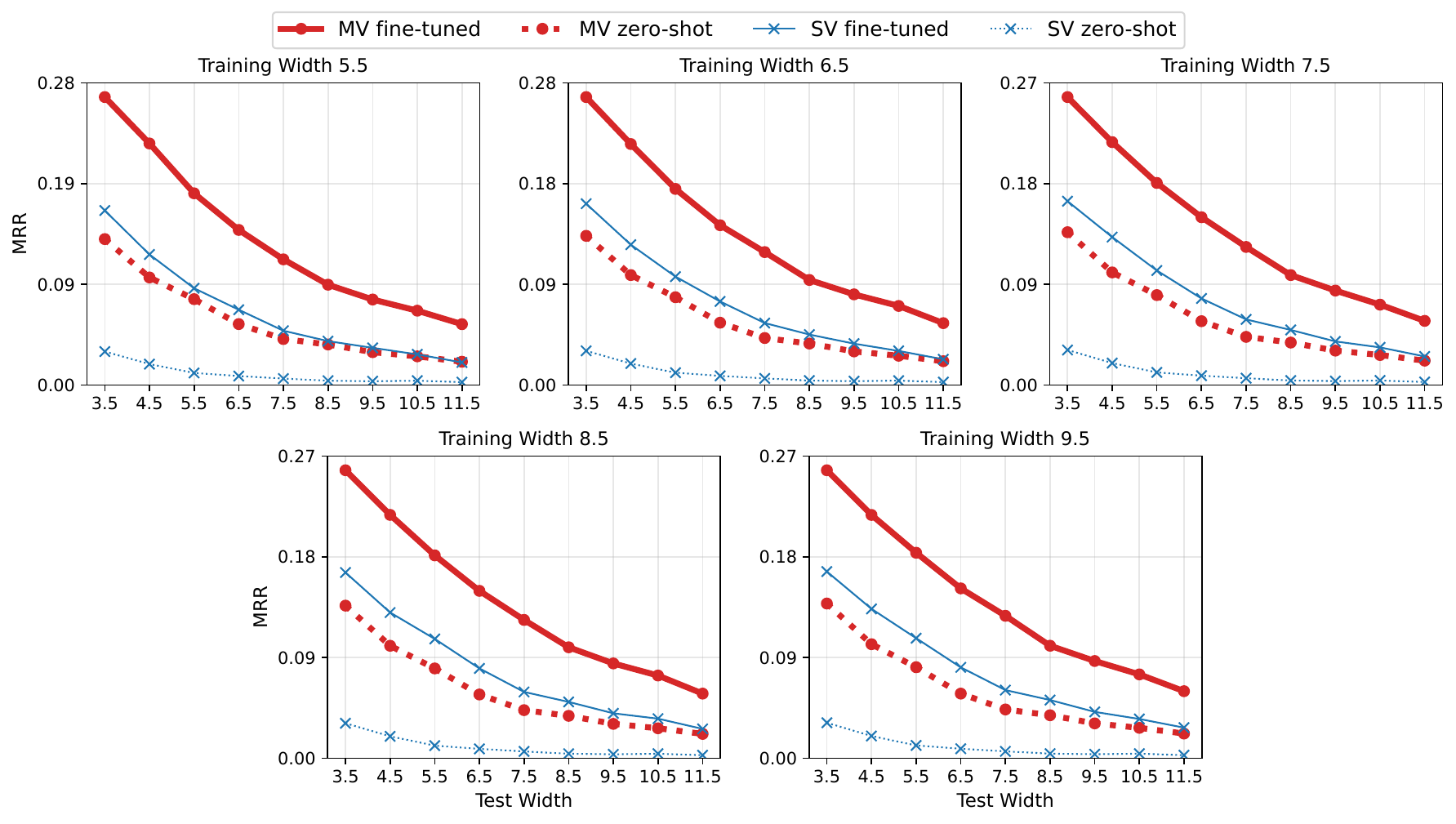}
    \caption{MRR across test mean widths for the jointly trained Jina Embeddings v4 single-vector and multi-vector heads at all five training mean widths from $5.5$ to $9.5$. Solid lines are fine-tuned results and dotted lines are the corresponding zero-shot baselines.}
    \label{fig:app-jina-joint-widths-mrr}
\end{figure*}

\begin{figure*}[p]
    \centering
    \includegraphics[width=\textwidth,height=0.82\textheight,keepaspectratio]{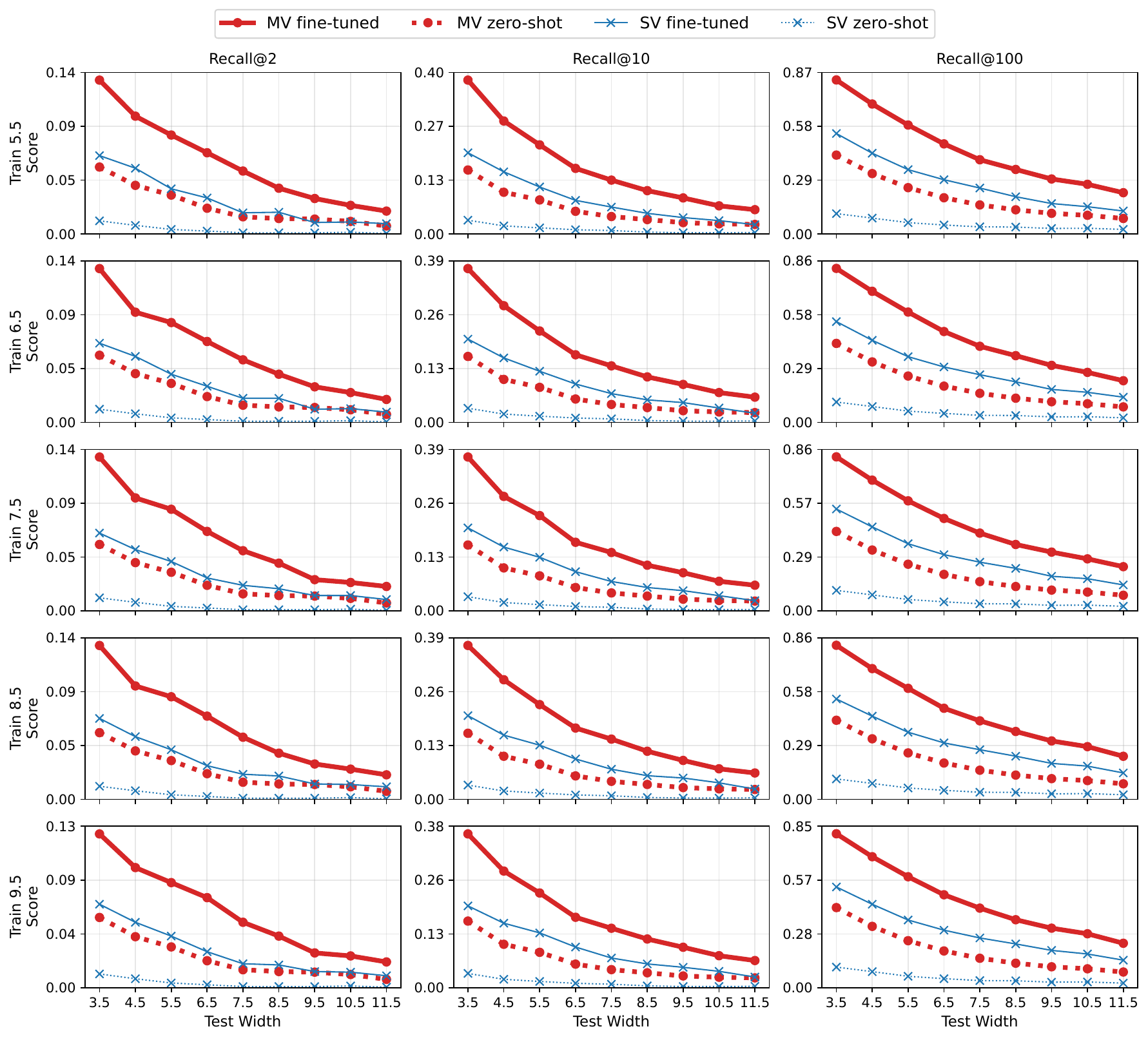}
    \caption{Recall across test mean widths for Jina Embeddings v4 single-vector and multi-vector models trained separately at all five training mean widths. Solid lines are fine-tuned results and dotted lines are the corresponding zero-shot baselines.}
    \label{fig:app-jina-separate-widths-recall}
\end{figure*}

\begin{figure*}[p]
    \centering
    \includegraphics[width=\textwidth,height=0.82\textheight,keepaspectratio]{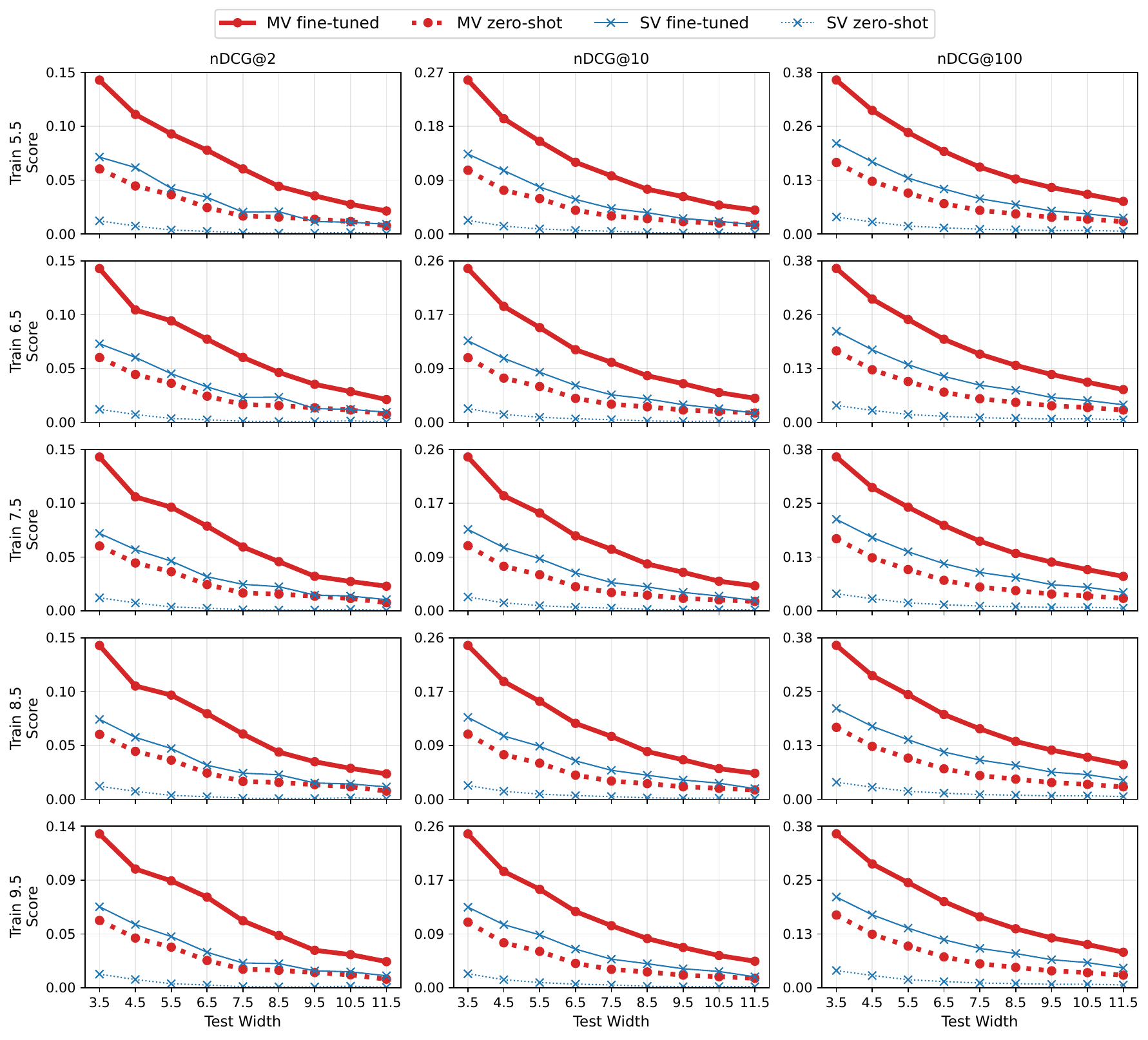}
    \caption{nDCG@2, nDCG@10, and nDCG@100 across test mean widths for Jina Embeddings v4 single-vector and multi-vector models trained separately at all five training mean widths. Solid lines are fine-tuned results and dotted lines are the corresponding zero-shot baselines.}
    \label{fig:app-jina-separate-widths-ndcg}
\end{figure*}

\begin{figure*}[p]
    \centering
    \includegraphics[width=\textwidth,height=0.82\textheight,keepaspectratio]{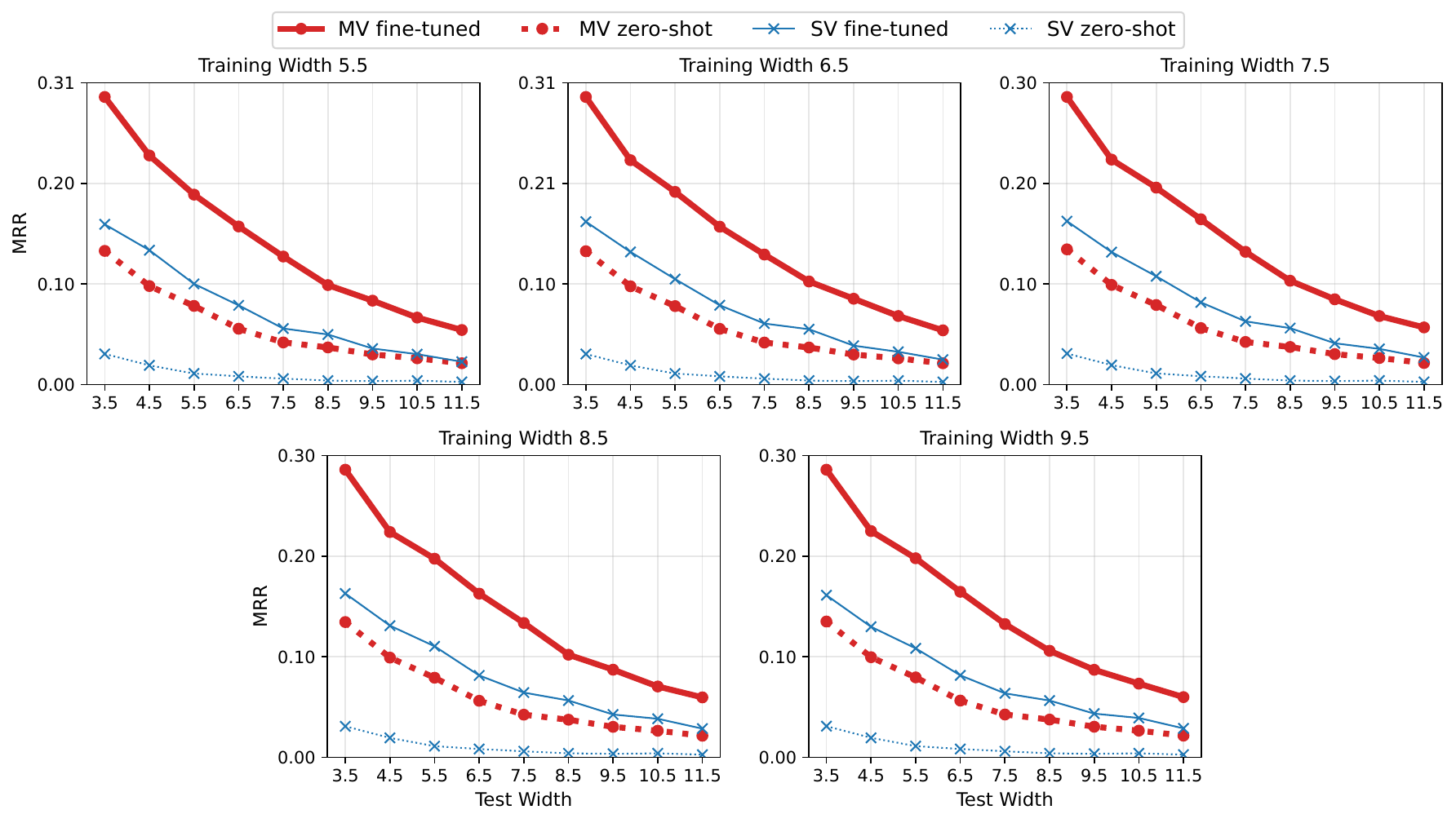}
    \caption{MRR across test mean widths for Jina Embeddings v4 single-vector and multi-vector models trained separately at all five training mean widths. Solid lines are fine-tuned results and dotted lines are the corresponding zero-shot baselines.}
    \label{fig:app-jina-separate-widths-mrr}
\end{figure*}

\paragraph{Consistency with the cross-model comparison.}
Both Jina settings reproduce, on a single backbone, every trend established for
the three separately pretrained models. Recall, nDCG and MRR fall monotonically
as the test suites widen; the multi-vector representation leads the
single-vector one at all $45$ train--test combinations and at every cutoff; the
lead is widest at shallow cutoffs, roughly doubling Recall@2 while adding about
$60\%$ at Recall@100, with nDCG and MRR in between; the relative lead grows as
the suites harden; and training width has only a marginal effect, smaller for
the multi-vector representation than for the single-vector one. What the shared
backbone adds is a controlled reading of what supervision buys, since the dotted
zero-shot curves and the solid fine-tuned curves belong to the same model.
Fine-tuning lifts the single-vector representation far more in relative terms,
because it starts from a much weaker baseline, but the absolute gain shrinks
steadily as the suites widen. The decisive comparison is between the fine-tuned
single-vector representation and the \emph{zero-shot} multi-vector one: the two
are within roughly a percentage point of each other at Recall@2 across the
entire sweep, and the fine-tuned single-vector representation pulls clearly
ahead only at Recall@100. Task-specific supervision therefore buys the
single-vector representation approximately what untrained late interaction
already delivers at the ranks that matter most, while the fine-tuned
multi-vector representation stays well ahead of both. 
\paragraph{Joint versus independent training.}
Training the two representations jointly or independently changes their absolute
quality slightly and their separation almost not at all. Independent training is
modestly better for both, raising grid-mean Recall@100 by about $4\%$ relative
for each, with somewhat larger gains at Recall@10. The ratio between the
representations, however, is essentially unmoved: pooled over the grid, the
multi-vector-to-single-vector ratios are $2.07$, $1.89$ and $1.61$ at Recall@2,
Recall@10 and Recall@100 under independent training, against $2.05$, $1.84$ and
$1.61$ under joint training, with nDCG@100 and MRR likewise differing by only a
couple of points of relative margin. This matters because the two settings fail
in opposite directions: joint training equalizes data order, optimizer state and
parameter count but denies each representation its own recipe, and it even
distills the single-vector ranking into the late-interaction head, whereas
independent training gives each its preferred recipe but reintroduces those
nuisance factors. The gap survives both, so it is not an artifact of sharing a
training run or of a particular choice of objective. We nonetheless read the two
as independent measurements rather than a clean ablation, since the independent
multi-vector runs are the corrected reruns and therefore differ from the joint
runs in temperature and learning rate as well as in training strategy.

\end{document}